\documentclass[11pt]{article}
\usepackage[letterpaper,margin=1in]{geometry}
\usepackage[utf8]{inputenc}
\usepackage[T1]{fontenc}
\usepackage[UKenglish]{babel}

\usepackage{amsmath,amssymb,amsthm,mathtools}
\usepackage{microtype}
\usepackage{booktabs}
\usepackage{enumitem}
\usepackage{orcidlink}
\usepackage{xcolor}
\usepackage{graphicx}
\usepackage{tikz}
\usepackage{pgfplots}
\pgfplotsset{compat=1.18}

\usepackage{hyperref}
\usepackage[capitalise,noabbrev]{cleveref}

\hypersetup{
  colorlinks=true,
  linkcolor=blue,
  citecolor=blue,
  urlcolor=blue,
  pdfauthor={},
  pdftitle={The Completion-Threshold Framework for Obligatory-Test Scheduling on Multiple Machines},
  pdfsubject={},
  pdfkeywords={}
}

\newtheorem{theorem}{Theorem}[section]
\newtheorem{lemma}[theorem]{Lemma}
\newtheorem{corollary}[theorem]{Corollary}
\newtheorem{definition}[theorem]{Definition}

\newcommand{\OPT}{\mathrm{OPT}}
\newcommand{\ALG}{\mathrm{ALG}}

\graphicspath{{./img/}}

\title{Hardness of Obligatory-Test Scheduling on Multiple machines}
\begin{document}

\begin{titlepage}
\thispagestyle{empty}

\begin{center}
{\LARGE\bfseries
Hardness of Obligatory-Test Scheduling on\\
Multiple Machines
\par}

\vspace{1.5em}

{\large
Kao-Chuan Liang\footnote{\texttt{daniel.cs12@gapp.nthu.edu.tw}}
\qquad
Ya-Chun Liang\footnote{\texttt{ycliang@cs.nthu.edu.tw} } 
% \orcidlink{0000-0002-7359-4335}
\par} 

\vspace{0.6em}

{\normalsize
Department of Computer Science\\
National Tsing Hua University\\
Hsinchu, Taiwan
\par}

% \vspace{0.6em}
% {\small Ya-Chun Liang: \href{https://orcid.org/0000-0002-7359-4335} {\texttt{0000-0002-7359-4335}} \par}
\end{center}

\vspace{1.5em}
\begin{abstract}
We study online scheduling with obligatory testing on $m$ identical parallel machines, with the objective of minimizing the sum of completion times. 
%%yc
Each job comprises a test of known length and a processing operation of initially unknown length. The processing time is revealed only when the test completes.
% In this model, each job must first undergo a test, and its processing time is revealed only after the test completes. 
%%yc
Unlike in optional testing models, the scheduler does not choose whether to acquire information. Instead, it must decide how to allocate machine capacity between testing unrevealed jobs and processing jobs whose sizes are already known.
% Thus the algorithmic challenge is not whether to acquire information, but how to allocate parallel machine capacity between revealing unknown jobs and processing already revealed jobs. 
%%yc
Previous single-machine lower-bound constructions suggest a natural 
%%yc
%%kc
$\sqrt{2}$ benchmark
% ~\cite{DogeasErlebachLiang2024} 
[ESA 2024: 48:1-14].
% baseline $\sqrt{2}$, 
%%yc 
However, these constructions 
%%yc
cannot be directly transferred
% do not transfer formally
to identical parallel machines by a simple replication argument. An online algorithm may interleave jobs from different copies, and the test and processing operation of a job need not be scheduled on the same machine.
% but they do not formally transfer to identical parallel machines by a simple copy argument: an online algorithm may interleave jobs from different copies and may process operations belonging to the same copy on different machines.
We address this difficulty by introducing a completion-threshold framework that reasons directly about global progress under total machine 
capacity. For each $X$, 
%%yc
let $T_X$ be
% the threshold $T_X$ is 
the earliest time at which the algorithm has completed at least $X$ jobs.
The identity $\sum_{X=1}^{N}T_X$
%%yc
then converts
% turns
pointwise progress bounds into lower bounds on the total completion time.
Using this framework, we prove a three-type lower bound of $1.4811$ and a dyadic multi-type lower bound 
%%yc
tending
% that tends
to~$3/2$. The latter also improves the deterministic single-machine lower bound from $\sqrt{2}$ to~$3/2$. On the algorithmic side, we give a parallel version of single-machine 1-SORT and prove that, if single-machine 1-SORT is $\rho$-competitive, then its parallel version is $\frac{2(m+\rho-1)}{m+1}$-competitive on $m$ identical machines.
\end{abstract}

\end{titlepage}

\setcounter{page}{1}

\section{Introduction}
The framework of scheduling with testing 
%%yc
captures
% involves
problems in which the processing time of a job is not known in advance, but can be revealed by performing a test. Most previous work considers the optional-test model, where the scheduler may decide whether to test a job first or to execute it directly using prior information. In contrast, the authors in~\cite{DogeasErlebachLiang2024} recently introduced the obligatory-test model, in which every job must be tested before its processing part can begin, and studied the single-machine case with the objective of minimizing the sum of completion times. In this paper, we extend the obligatory-test model to $m$ identical parallel machines. Each job $j$ consists of a test of known length $t_j$ and a processing operation of 
%%yc add
initially
unknown length $p_j$, where $p_j$ is revealed only 
%%yc
when the test completes.
% upon completion of the test.
All jobs are available at time $0$, and the objective is to minimize $\sum_j C_j$, where $C_j$ denotes the completion time of job $j$. We consider both standard test-time settings. In the uniform setting, all jobs have the same test time, and we assume without loss of generality that $t_j=1$ for all $j$. In the non-uniform setting, test times may vary across jobs.

The obligatory-test model is structurally different from the optional one. In the optional model, the scheduler must decide whether information acquisition is worthwhile. In the obligatory model, this decision disappears, since every job must eventually be tested.
The algorithm must instead decide how to divide machine time between two competing activities: testing jobs whose processing times are still unknown, and processing jobs whose test outcomes have already been revealed. If too much time is spent on testing, information is acquired quickly but a large processing backlog may accumulate. If too much time is spent on processing, the current backlog is reduced but future jobs are revealed more slowly.
This tension already appears on a single machine, but it becomes more subtle on parallel machines, where multiple tests and multiple processing operations may proceed simultaneously. Thus, extending the model to multiple identical machines is not merely a routine generalization.
It is the natural setting in which to ask whether parallelism can fundamentally mitigate the tradeoff between revealing jobs and processing revealed 
%%yc add
already
jobs.

Our goal is to understand this tradeoff in the multiple-machine obligatory setting. Single-machine lower-bound constructions provide useful intuition, but they do not directly yield a formal lower bound for identical parallel machines by simply replicating hard instances. After such a replication, an online algorithm is not forced to keep the copies separated.
It may
%%yc add
instead
interleave jobs from different copies and may process operations belonging to the same copy on different machines. Thus, a 
%%yc 
lower bound for the multi-machine setting
% genuine multi-machine lower bound 
must reason directly about the total machine capacity available over time. The single-machine exchange argument also does not directly extend. On one machine, the schedule is a single linear sequence, and an exchange argument can move the processing operation of a job close to its test while preserving a clear order of operations. On multiple machines, there is no canonical exchange.
The test and the processing operation of the same job may lie on different machine timelines, and moving one operation may change the relative order of operations on several machines at once. We therefore use completion thresholds as the main analytical object. For each $X$, let $T_X$ be the earliest time at which the algorithm has completed at least $X$ jobs. Since
\[
    \ALG=\sum_{X=1}^{N}T_X,
\]
it suffices to lower-bound the global rate at which jobs can be revealed and completed. This viewpoint is particularly suited to obligatory testing, because every completed job must first pass through a test before its processing time is known.

Using this framework, we prove lower bounds that go beyond the single-machine 
%%yc
benchmark $\sqrt{2}$ from~[5].
% baseline $\sqrt{2}$. 
A three-type construction gives an asymptotic lower bound of $1.4811$, attained around $\alpha\approx 0.1939$ and $\beta\approx 0.2873$.
A dyadic multi-type construction gives a lower bound tending to $3/2$, and also improves the deterministic single-machine lower bound from $\sqrt{2}$ to $3/2$.  
On the algorithmic side, we prove that the single-machine 1-SORT guarantee lifts to $m$ identical machines with ratio $\frac{2(m+\rho-1)}{m+1}$, where $\rho = 1.861$ for arbitrary test times and $\rho = 1.585$ for uniform test times.

\begin{table}[htbp]
\centering
\renewcommand{\arraystretch}{1.18}
\begin{tabular}{c|c|c|c}
\toprule
\textbf{Setting} & \textbf{Machines} & \textbf{Best known upper bound} & \textbf{Lower bound} \\
\midrule
\textbf{Uniform tests} & 1
& $1.585$~\cite{DogeasErlebachLiang2024}
& $\sqrt{2}$~\cite{DogeasErlebachLiang2024} $\rightarrow$ $\frac32$ \\

\textbf{Arbitrary tests} & 1
& $1.861$~\cite{DogeasErlebachLiang2024}
& $\sqrt{2}$~\cite{DogeasErlebachLiang2024} $\rightarrow$ $\frac32$ \\

\textbf{Uniform tests} & $m$
& $\frac{2(m+\rho-1)}{m+1}$
& $\frac32$ \\

\textbf{Arbitrary tests} & $m$
& $\frac{2(m+\rho-1)}{m+1}$
& $\frac32$ \\
\bottomrule
\end{tabular}
\caption{Bounds for the obligatory-test model. The arrows indicate the improvements obtained in this paper. In the multiple-machine upper bounds, \(\rho\) denotes the corresponding single-machine 1-SORT ratio: \(\rho=1.585\) for uniform tests and \(\rho=1.861\) for arbitrary tests.}
\label{fig:landscape}
\end{table}

\subparagraph{Related Work.}
The framework of scheduling with testing can be studied under both stochastic and adversarial uncertainty. In the stochastic setting, testing reveals random job characteristics, and the objective is to minimize expected costs~\cite{LeviMagnantiShaposhnik2019}. This probabilistic approach differs from the adversarial online setting, which originated with the optional-test model introduced 
%%yc delete 
% by the authors 
in~\cite{DurrErlebachMegowMeissner2018,DurrErlebachMegowMeissner2020}. In this model, jobs have known upper bounds, and the scheduler decides whether to test them 
%%yc
in order to reveal their
% to reveal 
actual processing times. For a single machine with uniform test times, the authors in~\cite{DurrErlebachMegowMeissner2020} established tight bounds for makespan minimization and provided competitive algorithms and lower bounds for minimizing the sum of completion times. This was later extended to arbitrary test times 
%%yc
in~\cite{AlbersEckl2021WAOA},
% by Albers and Eckl~\cite{AlbersEckl2021WAOA}, 
with subsequent algorithmic improvements 
%%yc delete
% by the authors
in~\cite{LiuEtAl2023} via a refined amortized analysis. The broader paradigm of explorable uncertainty has also inspired related oracle-based variants~\cite{DufosseDurrNadalTrystramVasquez2022} and applications in other online optimization domains, such as speed scaling~\cite{BampisDogeasKononovLucarelliPascual2021}.

Transitioning to multiple identical machines, optional-test scheduling has been studied extensively, primarily for makespan minimization. Albers and Eckl~\cite{AlbersEckl2021} explored optional-test scheduling on multiple identical machines in both non-preemptive and preemptive settings. In particular, they established lower bounds of $\max\left\{\varphi,\,2-\frac{1}{m}\right\}$ for the non-preemptive model and $\max\left\{\varphi,\,2-\frac{2}{m}+\frac{1}{m^2}\right\}$ for the preemptive model.
% both of which become asymptotically tight as $m\to\infty$. 
The non-preemptive upper bounds were subsequently improved 
%%yc delete
% by a series of works from the authors 
in~\cite{GongLin2022,GongGoebelLinMiyano2022,GongEtAl2025}. More closely related to our objective is the work 
%%yc delete
% of the authors
in~\cite{GongChenHayashi2024}, 
%%yc 
which studies
% who studied 
multiprocessor optional testing
%%yc 
for minimizing the total completion time. The authors obtained
% to minimize the total completion time. They achieved 
competitive ratios of $3.2361$ for arbitrary test times and $2.9271$ for uniform test times. 
%%yc add??
These bounds were later improved in~\cite{LiuEtAl2023}, where the corresponding ratios are $2.7763$ and $2.73606$, respectively.
However, compared to the makespan literature, the lower-bound side for multiprocessor testing under the sum-of-completion-times objective has remained largely underdeveloped.

% Distinct from optional testing, the obligatory-test model was only recently introduced by Dogeas et al.~\cite{DogeasErlebachLiang2024}. They focused on the single-machine case for minimizing the sum of completion times, proving upper bounds of $1.585$ (uniform) and $1.861$ (arbitrary), alongside a deterministic lower bound of $\sqrt{2}$. Prior to our work, this single-machine $\sqrt{2}$ bound was the only explicit deterministic lower bound for this objective in the obligatory model, with no comparable bounds known for multiple machines.
%\textcolor{blue}{
% While recent work by Buld and Schulz~\cite{BuldSchulz2025} has extended the optional testing model to multiple machines (for the weighted objective), the multiple-machine landscape for the obligatory model remains unexplored. Motivated by this, we focus on closing the upper and lower bound gaps for the unweighted total completion time in the multiprocessor obligatory-test setting.
%}
Distinct from optional testing, the obligatory-test model was introduced
%%yc delete
% by the authors 
in~\cite{DogeasErlebachLiang2024}. 
%%yc
The authors
% They
studied the single-machine case for minimizing the sum of completion times, proving upper bounds of $1.585$ for uniform test times and $1.861$ for arbitrary test times, together with a deterministic lower bound of $\sqrt{2}$.
Although the case \(m=1\) is formally contained in the parallel-machine model, this does not give a lower bound for any fixed \(m>1\). A naive replication of a single-machine hard instance does not force an online algorithm to keep the copies separated: the algorithm may interleave jobs from different copies and may schedule the test and processing operation of the same copied job on different machines. 
%%yc delete
% Thus, before our work, no direct lower-bound framework was known for the identical parallel-machine setting with fixed \(m>1\).
% Before our work, no direct lower-bound framework was known for the identical parallel-machine setting ($m>1$).
%%
The main difficulty is that single-machine exchange arguments do not provide a canonical way to compare tests and executions across several machine timelines. This motivates our completion-threshold approach, which reasons directly about global progress under total machine capacity.
% Most recently, Buld and Schulz~\cite{BuldSchulz2025} studied 
% %%kc0419
% %adversarial 
% optional
% %obligatory
% %%kc0419 change 
% testing for the total weighted completion time on both single and multiple machines.
% %%kc0419
% %Their work is complementary to ours, as we 
% We
% %%kc0419
% focus on closing the upper and lower bound gaps for the unweighted total completion time in the multiprocessor obligatory-test setting.

% \subparagraph{Paper Organization.}

% The remainder of the paper is organized as follows.
% Section~\ref{sec:pre} defines the model and introduces the notation used throughout our analysis. Section~\ref{sec:LB} develops the lower-bound framework, applying it to both three-type and multi-type adversarial constructions.  Section~\ref{sec:alg} presents our baseline deterministic algorithm, establishes its competitive guarantee, and discusses the limitations of the underlying priority family. Finally, we conclude in Section~\ref{sec:con} by discussing the remaining gap between the upper and lower bounds and outlining several open problems.

\subparagraph{Paper Organization.}
The remainder of the paper is organized as follows. Section~\ref{sec:pre} defines the model and introduces the notation used throughout the paper. Section~\ref{sec:overview} gives a technical overview of the lower-bound and upper-bound arguments. Section~\ref{sec:LB} develops the completion-threshold lower-bound framework and applies it to the three-type and dyadic multi-type constructions. Section~\ref{sec:alg} presents parallel 1-SORT and proves a lifting theorem from the single-machine guarantee to $m$ identical machines. Section~\ref{sec:con} concludes with open problems.

\section{Preliminaries}\label{sec:pre}

In this section, we formally define the obligatory-test scheduling on identical parallel machines and establish the 
%%yc
notation used throughout the paper.
% necessary notation. 
We then briefly review the 
%%yc
notion
% concept
of competitive analysis. Finally, we introduce the completion-threshold quantities $T_X$, which provide a mathematically convenient framework for our lower-bound arguments in Section~\ref{sec:LB}.

\subparagraph{Problem Definition.}
We study 
%%yc delete
% the problem of 
scheduling with obligatory testing on $m$ identical parallel machines, where a set of jobs $J = \{ 1,...,N\}$ is available at time $0$. Each job $j \in J$ requires two operations:
\begin{itemize}
    \item a test operation ($\tau_j$) of known length $t_j \ge 0$, and
    \item a processing operation ($\pi_j$) of unknown length $p_j \ge 0$.
\end{itemize}
The exact value of $p_j$ is revealed only upon completion of job $j$'s test.
In the obligatory-test model, testing is a strict prerequisite for processing. Thus, the processing operation of job $j$ cannot begin until its corresponding test has finished. The processing operation may be executed on any machine and does not need to start immediately after the test completes.
%%yc
In particular,
% (i.e., 
idle time or the execution of other operations may occur between a job's test and its processing operation.

We denote the total size of job $j$ by $\sigma_j := t_j + p_j$. 
The completion time of job $j$, denoted by $C_j$, is
%%yc delete
% defined as
the time at which its processing operation finishes. Our objective is to minimize the sum of completion times, $\sum_{j\in J} C_j$. We consider two standard test-time settings.
%%yc add
In
the uniform setting,
%%yc delete
% where
all jobs share the same test time,
%%yc
which we assume 
% assumed
without loss of generality to be $t_j = 1$ for all $j \in J$.
%%yc 
In
% and
the non-uniform setting, 
%%yc delete
% where
test times may vary across jobs. 
% For comparison, in the more widely studied optional-test model, the scheduler may choose to either test a job first or execute it directly using a known upper bound $u_j \ge p_j$ on its processing time.

\subparagraph{Competitive Ratio.}
Let $\ALG(I)$ denote the objective value achieved by 
%%yc
an
% the 
online algorithm on a given instance $I$, and let $\OPT(I)$ denote the value of an optimal offline schedule for the same instance. When the instance $I$ is clear from context, we abbreviate $\ALG(I)$ and $\OPT(I)$ as $\ALG$ and $\OPT$, respectively. 
%%yc delete
% We say that
A deterministic online algorithm is $\rho$-competitive if $\ALG(I)\le \rho\cdot \OPT(I)$ for every instance $I$.

\subparagraph{Completion Times and the Quantities \texorpdfstring{{\boldmath $T_X$}}{TX}.}

To facilitate our lower-bound arguments, 
%%yc
we
% it is useful to 
track the rate at which an algorithm completes jobs. Given a schedule produced by an online algorithm $A$, let $M_A(t)$ denote the number of jobs completed by time $t$. For each integer $X\in\{1,2,\dots,N\}$, 
%%yc delete
% we
define the completion threshold
%%yc delete
% $T_X$ as
\[
T_X := \min\{\, t \mid M_A(t)\ge X \,\}.
\]
In words, $T_X$ is the earliest time at which the schedule has completed at least $X$ jobs. The following lemma 
%%yc
shows
% demonstrates
that the total completion time can be exactly recovered by summing these threshold values.

\begin{lemma}\label{lem:tx-identity}
For every scheduling algorithm, $\ALG=\sum_{X=1}^N T_X$.
\end{lemma}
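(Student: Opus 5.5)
The plan is to sort the completion times and identify the $X$-th smallest one with $T_X$. Let $C_{(1)}\le C_{(2)}\le\dots\le C_{(N)}$ be the completion times $C_j$ of the schedule in nondecreasing order, ties broken arbitrarily. Reordering a sum does not change it, so $\ALG=\sum_{j\in J}C_j=\sum_{X=1}^N C_{(X)}$. It therefore suffices to show that $T_X=C_{(X)}$ for every $X\in\{1,\dots,N\}$.

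First I record that $M_A(t)=|\{j: C_j\le t\}|$, where a job counts as completed at time $t$ once its processing operation has finished by $t$. This function is a nondecreasing, right-continuous step function that jumps exactly at the values $C_j$. Because $M_A$ takes only finitely many values and is right-continuous, the set $\{t: M_A(t)\ge X\}$ is a closed half-line $[s,\infty)$. Hence the minimum in the definition of $T_X$ is attained, and for $X\le N$ the set is nonempty because $M_A(C_{(N)})=N$.

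Next I prove the two inequalities. For $T_X\le C_{(X)}$: at time $t=C_{(X)}$ the jobs with completion times $C_{(1)},\dots,C_{(X)}$ have all finished, so $M_A(C_{(X)})\ge X$ and $C_{(X)}$ belongs to the set. For $T_X\ge C_{(X)}$: take any $t<C_{(X)}$. Every job $j$ with $C_j\le t$ must lie among the first $X-1$ positions of the sorted order, since all later entries are at least $C_{(X)}>t$. Thus $M_A(t)\le X-1$, and no $t<C_{(X)}$ lies in the set. Combining the two gives $T_X=C_{(X)}$, and summing over $X$ yields the lemma.

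No step is a real obstacle here. The only point needing care is the convention that completion at exactly time $t$ counts toward $M_A(t)$, which makes the minimum well defined and equal to $C_{(X)}$. Ties among completion times need no special treatment, because the argument uses only the multiset of sorted values. I will also note that the identity is purely combinatorial, holding for any schedule on any number of machines, which is why it can be used freely in the lower-bound arguments of Section~\ref{sec:LB}.
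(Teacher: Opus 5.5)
Your proof is correct and takes essentially the same approach as the paper: both identify $T_X$ with the $X$-th smallest completion time and then sum over $X$. The paper handles ties by grouping equal completion times into distinct values $\theta_r$ with multiplicities $a_r$ and showing $T_X=\theta_r$ for $X\in\{A_{r-1}+1,\dots,A_r\}$. You instead work directly with the sorted list $C_{(1)}\le\cdots\le C_{(N)}$ and check both inequalities and attainment of the minimum explicitly, which is equivalent.
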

\begin{proof}
Let
\[
\theta_1<\theta_2<\cdots<\theta_q
\]
be the distinct completion times in the schedule, and let $a_r$ be the number of jobs that complete exactly at time $\theta_r$. Then
\[
\sum_{r=1}^q a_r = N
\qquad\text{and}\qquad
\ALG = \sum_{r=1}^q a_r \theta_r.
\]
Define
\[
A_r := \sum_{h=1}^r a_h
\qquad\text{and}\qquad
A_0:=0.
\]
By definition, exactly $A_r$ jobs have completed by time $\theta_r$. Therefore, for every
\[
X\in\{A_{r-1}+1,\dots,A_r\},
\]
the earliest time by which at least $X$ jobs have completed is exactly $\theta_r$. That is,
\[
T_X=\theta_r
\qquad
\text{for all } X\in\{A_{r-1}+1,\dots,A_r\}.
\]

Hence,
\[
\sum_{X=1}^N T_X
=
\sum_{r=1}^q \sum_{X=A_{r-1}+1}^{A_r} T_X
=
\sum_{r=1}^q \sum_{X=A_{r-1}+1}^{A_r} \theta_r
=
\sum_{r=1}^q a_r \theta_r
=
\ALG.
\]
This proves the claim.
\end{proof}
Lemma~\ref{lem:tx-identity} forms the foundation of our lower-bound framework in Section~\ref{sec:LB}. Since $\ALG=\sum_{X=1}^N T_X$, any lower bounds 
%%yc delete
% established
on the individual thresholds $T_X$ immediately translate 
%%yc
into
% to
a lower bound on the total completion time.

% \section{technical overview}
\section{Overview of the Techniques}\label{sec:overview}
We give a proof roadmap for the two main technical parts. 
%%yc
For the lower bound, we use completion thresholds to account for the amount of work that any online algorithm must have performed by the time it completes a given number of jobs. 
For the upper bound, we show that the parallel greedy schedule can be compared with the list schedule induced by the operation order of single-machine 1-SORT.
% The lower-bound proof is a work-accounting argument over completion thresholds. The upper-bound proof is an order-compatibility argument showing that the parallel greedy schedule can be analyzed as a list schedule induced by the single-machine 1-SORT order.

\subparagraph{Lower Bound.
% : accounting by the smallest completed type.
}Figure~\ref{fig:exchange-obstruction} illustrates the 
%%yc
difficulty of applying
% obstruction to
a direct exchange argument on multiple machines. 
\begin{figure}[t]
\centering
\includegraphics[width=.7\linewidth]{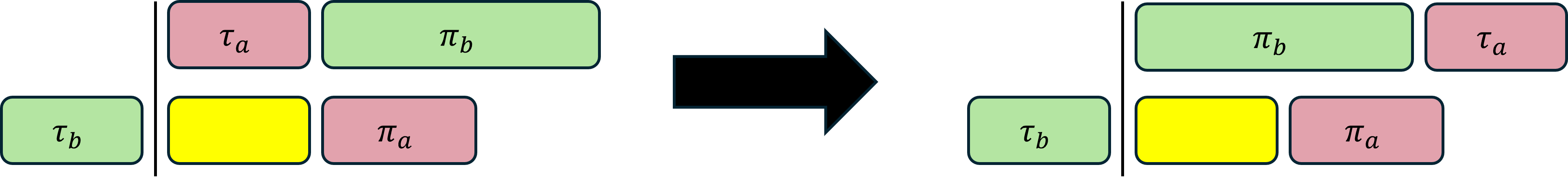}
\caption{
%%yc
A local difficulty for exchange arguments on multiple machines.
%%yc
Blocks of the same color belong to the same job, and the yellow block represents another operation in the schedule. 
% The yellow block denotes an arbitrary operation.
% A local exchange obstruction on multiple machines.  
Moving an execution operation from one machine timeline to another may change the execution order of other jobs. The lower-bound proof avoids this 
%%yc
difficulty
% issue
by using 
%%yc
completion thresholds
% threshold work accounting 
rather than exchange transformations.}
\label{fig:exchange-obstruction}
\end{figure}
Instead, we fix a completion threshold \(T_X\) and account for the amount of work that must have been completed by that time. In the \(K\)-type construction, a type-\(i\) job has unit test time and processing time \(i\), and \(\lambda_i\) denotes the fraction of jobs of type \(i\). The adversary reveals job types in decreasing order of processing time. Thus, if the smallest type among the first \(X\) completed jobs is \(r\), then all tests of types \(r+1,\ldots,K\) must have already been completed before any type-\(r\) job could 
%%yc delete
% even
be revealed. This gives the prefix cost
\[
    N\sum_{j=r+1}^{K}\lambda_j .
\]
%%yc
After this prefix has been accounted for,
% Moreover, after this prefix is paid, 
each of the \(X\) completed jobs still requires at least \(r+1\) units of work.  
This gives the baseline term \((r+1)X\). Finally, if \(X\) exceeds the total number of jobs of types \(r,\ldots,t-1\), then some completed jobs must have type at least \(t\).
Each such job adds one additional unit beyond the baseline. This yields the branch inequality
\[
    mT_X
    \ge
    N\sum_{j=r+1}^{K}\lambda_j
    +(r+1)X
    +
    \sum_{t=r+2}^{K}
    \left(
        X-N\sum_{j=r}^{t-1}\lambda_j
    \right)^+ .
\]
Normalizing \(x=X/N\), this becomes \(T_X\ge (N/m)f_r(x)\).  
Since the smallest completed type may be any \(r\), the usable threshold lower bound is the lower envelope
\[
    T_X
    \ge
    \frac{N}{m}g_K(X/N),
    \qquad
    g_K(x):=\min_{0\le r\le K}f_r(x).
\]

The dyadic choice of type proportions is designed so that this lower envelope can be computed explicitly. For
\[
    % \lambda_0=\frac12,\qquad
    \lambda_i=\frac{1}{2^{i+1}}\quad(0\le i\le K-1),
    \qquad
    \lambda_K=\frac{1}{2^K},
\]
write \(b_r(x)\) for the affine part of the branch \(f_r(x)\),
%%yc add
namely
\[
    b_r(x):=\sum_{j=r+1}^{K}\lambda_j+(r+1)x.
\]
Thus
\[
    b_r(x)=2^{-(r+1)}+(r+1)x
    \qquad (0\le r\le K-1),
\]
while
\[
    b_K(x)=(K+1)x.
\]
For \(0\le r\le K-2\), adjacent affine parts satisfy
\[
    b_{r+1}(x)-b_r(x)=x-2^{-(r+2)}.
\]
The final transition is
\[
    b_K(x)-b_{K-1}(x)=x-2^{-K}.
\]
These identities determine the active branch of the lower envelope. The branch \(f_K\) is active on \([0,2^{-K}]\).
For \(1\le r\le K-2\),
the branch \(f_r\) is active on
\[
    \bigl[2^{-(r+2)},\,2^{-(r+1)}\bigr];
\]
and \(f_0\) is active on \([1/4,1]\). Therefore
\[
    \ALG
    =
    \sum_{X=1}^{N}T_X
    \ge
    \frac{N}{m}\sum_{X=1}^{N}g_K(X/N)
    \ge
    \frac{N^2}{m}\int_0^1 g_K(x)\,dx.
\]
A direct evaluation gives
\[
    \int_0^1 g_K(x)\,dx
    =
    1-\frac{1}{2\cdot 4^K}.
\]
%%yc delete
% and hence, 
Since \(N=mn\),
%%yc add
this implies
\[
    \ALG
    \ge
    m\left(1-\frac{1}{2\cdot 4^K}\right)n^2 .
\]

The offline comparison uses the opposite structure.
Jobs are ordered by increasing total size.  
For the dyadic instance, the balanced offline schedule satisfies the recurrence
\[
    B_1=\frac58,
    \qquad
    B_K=\frac12+\frac14 B_{K-1}.
\]
%%yc
Hence,
\[
    B_K=\frac23-\frac{1}{6\cdot 4^K}.
\]
Thus
\[
    \OPT
    \le
    m\left(\frac23-\frac{1}{6\cdot 4^K}\right)n^2+O(n).
\]
Combining the two estimates gives
\[
    \liminf_{n\to\infty}\frac{\ALG}{\OPT}
    \ge
    \frac{1-\frac{1}{2\cdot 4^K}}
         {\frac23-\frac{1}{6\cdot 4^K}}
    =
    \frac{3(2-4^{-K})}{4-4^{-K}},
\]
which tends to \(3/2\).

\subparagraph{Upper Bound.
% : from parallel 1-SORT to an induced list schedule.
}The upper-bound algorithm is the parallel version of single-machine 1-SORT. At each event time, the algorithm first updates the available set.
Completed tests reveal the corresponding processing times, and the newly revealed execution operations become available. If \(r\) machines are idle, the algorithm starts up to \(r\) available unscheduled operations with smallest running times, where
\[
    \ell(\tau_j)=t_j,
    \qquad
    \ell(\pi_j)=p_j
\]
once \(\pi_j\) has been revealed. Ties are resolved by a fixed deterministic rule. Thus, parallel 1-SORT is a batched greedy rule.
It is 
%%yc delete
% locally
defined by the currently available operations, but it may start several operations simultaneously.
The 
%%yc add
main
technical issue is that the known guarantee for 1-SORT is stated through a single-machine operation order, while the parallel algorithm is defined through event batches. Let
\[
    L=(o_1,o_2,\ldots,o_{2N})
\]
be the order produced by single-machine 1-SORT on the same instance. 
We compare parallel 1-SORT with the \(m\)-machine list schedule induced by \(L\). At each event time, after updating the available set, the induced schedule scans \(L\) from left to right and starts the first available unscheduled operations on the idle machines.
The upper-bound proof has two 
%%yc
steps.
% layers.
First, we show that parallel 1-SORT is batch-equivalent to this induced list schedule. 
Second, after this reduction, we apply prefix-volume bounds to the induced list schedule. Let \(B_P(t)\) be the event batch chosen by parallel 1-SORT at time \(t\), and let \(B_L(t)\) be the event batch chosen by the list schedule induced by \(L\).  
Suppose, for contradiction, that \(t\) is the first event time at which the two batches differ. Choose
\[
    h\in B_P(t)\setminus B_L(t).
\]
Since the list schedule skips \(h\), there are at least \(r\) available operations before \(h\) in \(L\), where \(r\) is the number of idle machines.  
Since parallel 1-SORT chooses \(h\), among those earlier operations there is some \(o\) with
\[
    o\prec_L h
    \qquad\text{and}\qquad
    h\prec_P o.
\]

The proof then 
%%yc
derives
% forces
a contradiction 
%%yc
from
% through
two structural observations. 
First, \(h\) cannot be a test.
If it were a test, then it would have been available when the single-machine sequence chose \(o\), and the relation \(h\prec_P o\) would contradict \(o\prec_L h\). Hence \(h=\pi_j\) for some job \(j\).
Second, one must have
\[
    o\prec_L \tau_j\prec_L \pi_j=h.
\]
If instead \(\tau_j\prec_L o\), then \(\pi_j\) would already be available when the single-machine sequence chose \(o\), again contradicting \(h\prec_P o\). 
The remaining step is where the multi-machine nature matters. The prefix-clean invariant implies that no operation before \(\tau_j\) can remain as old available backlog after \(\tau_j\) has been scheduled. Together with monotonicity of test completion times along the order \(L\), this shows that \(h\), and every relevant operation before \(h\), must become newly available exactly at time \(t\). These newly available executions correspond to distinct tests completing at time \(t\), and each such test completion frees one machine. Thus, more machines are idle at time \(t\) than the assumed number \(r\), contradicting the fact that the list schedule skipped \(h\). This proves batch equivalence.

\begin{figure}[t]
\centering
\includegraphics[width=.4\linewidth]{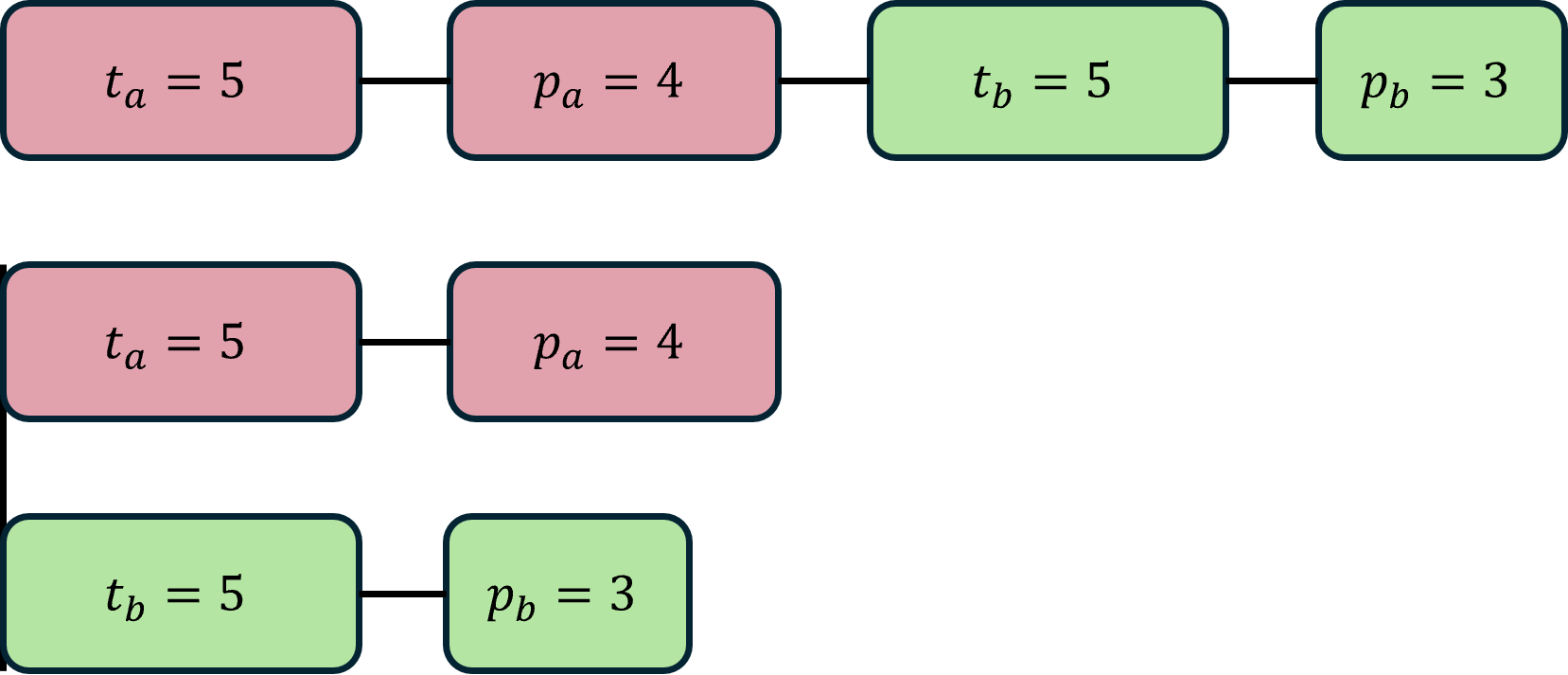}
\caption{A small instance illustrating batch equivalence. 
%%yc add
Blocks of the same color belong to the same job.
%%yc
The top row shows the linear operation order produced by single-machine 1-SORT, while the bottom rows show the corresponding event batches on two machines.
% The single-machine 1-SORT rule produces a linear operation order, whereas the parallel rule chooses simultaneous event batches. 
The batch-equivalence theorem proves that the \(m\)-machine list schedule induced by the single-machine order starts the same event batches as parallel 1-SORT.}
\label{fig:batch-equivalence-overview}
\end{figure}

Once batch equivalence is established, the completion-time analysis is short. For each job \(j\), define
\[
    U_j:=\sum_{q\prec_L\tau_j}\ell(q),
    \qquad
    V_j:=\sum_{\tau_j\prec_L q\prec_L\pi_j}\ell(q).
\]
In the single-machine 1-SORT schedule,
\[
    C_j^{(1)}=U_j+V_j+\sigma_j.
\]
In the induced \(m\)-machine list schedule, the test \(\tau_j\) can be delayed only by operations before it in \(L\), and the execution \(\pi_j\) is bounded by the usual prefix-volume argument.  This yields
\[
    C_j^{(m)}
    \le
    \frac{U_j+V_j}{m}+\sigma_j
    =
    \frac1m C_j^{(1)}
    +
    \left(1-\frac1m\right)\sigma_j.
\]
Summing over all jobs gives
\[
    \ALG_m
    \le
    \frac1m\ALG_1+
    \left(1-\frac1m\right)\sum_j\sigma_j.
\]
If single-machine 1-SORT is \(\rho\)-competitive, then \(\ALG_1\le \rho\OPT_1\).  Writing
\[
    A:=\sum_{j=1}^{N}j\sigma_j,
    \qquad
    B:=\sum_{j=1}^{N}\sigma_j,
\]
we have \(\OPT_1=A\), while the standard lower bound for \(m\) identical machines gives
\[
    \OPT_m
    \ge
    \frac1m A+\frac12\left(1-\frac1m\right)B.
\]
Therefore
\[
    \frac{\ALG_m}{\OPT_m}
    \le
    \frac{\rho A+(m-1)B}
         {A+\frac{m-1}{2}B}.
\]
Setting \(x=A/B\ge 1\), the right-hand side becomes
\[
    R(x)=\frac{\rho x+m-1}{x+\frac{m-1}{2}}.
\]
For \(\rho\le 2\), 
this function is nonincreasing on \([1,\infty)\), so
\[
    R(x)\le R(1)=\frac{2(m+\rho-1)}{m+1}.
\]
This proves the lifting theorem.

\section{Lower Bounds}\label{sec:LB}
This section proves the two lower-bound results stated in the introduction. We first give a three-type construction, which yields the numerical bound \(1.4811\).  We then give the dyadic construction, whose lower bound tends to \(3/2\).  Throughout this section, we write \((x)^+ := \max\{x,0\}\).

\subsection{A Three-type Lower Bound}
We begin with a three-type adversarial construction, which illustrates the
%%yc delete
% full
logic of the $T_X$ framework. All jobs have
%%yc delete
% a
unit test time, and their processing times are revealed according to the order in which they are tested. The construction separates jobs into long, medium, and short types.
%%yc 
It shows
% demonstrating 
how adversarial control of the revelation order translates into pointwise lower bounds on $T_X$, and consequently into a lower bound on $\sum_j C_j$.

\begin{theorem}\label{thm:three-type-lb}
Every deterministic online algorithm for obligatory-test scheduling on $m$ identical machines has an asymptotic competitive ratio of at least $\rho :=
\max_{\substack{0\le \alpha\le \beta\\ \alpha+\beta\le 1}}
\frac{\frac12+\alpha+\beta-\frac{\beta^2}{2}}
     {\frac12+\alpha^2+\alpha\beta+\frac{\beta^2}{2}}$. Numerically, this yields
     $\rho \ge 1.4811$, 
     %%yc
     attained at approximately 
     % where
     $\alpha \approx 0.1939, \beta \approx 0.2873$.
\end{theorem}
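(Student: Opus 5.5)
We take \(K=2\) in the construction sketched in Section~\ref{sec:overview}. All \(N=mn\) jobs have unit tests. A long job (type~2) has \(p_j=2\), a medium job (type~1) has \(p_j=1\), and a short job (type~0) has \(p_j=0\). The type fractions are \(\lambda_2=\alpha\), \(\lambda_1=\beta\), \(\lambda_0=1-\alpha-\beta\). The adversary fixes these quotas in advance and, whenever a test completes, assigns to that job the longest type whose quota is not yet exhausted; simultaneous completions are ordered arbitrarily. The first \(\alpha N\) tests to complete therefore reveal long jobs, the next \(\beta N\) reveal medium jobs, and the rest reveal short jobs, whatever the algorithm does.

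\textbf{Upper bound on \(\OPT\).} Since \(\OPT\) knows the sizes, it runs jobs in order of total size \(1,2,3\), round-robin over the machines, so that each machine receives \(n\) jobs in SPT order. On one machine, the \(k\)-th completion time is \(n\) times the cumulative work density \(S(x)\) at \(x=k/n\). Here \(S\) has slope \(1\) on \([0,1-\alpha-\beta]\), slope \(2\) on the next interval of length \(\beta\), and slope \(3\) on the last interval of length \(\alpha\). A direct integration gives \(\int_0^1 S=\tfrac12+\alpha^2+\alpha\beta+\tfrac{\beta^2}{2}\). Hence \(\OPT\le m n^2(\tfrac12+\alpha^2+\alpha\beta+\tfrac{\beta^2}{2})+O(mn)\).

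\textbf{Lower bound on the algorithm.} Fix \(X\) and let \(r\) be the smallest type among the first \(X\) completed jobs. Because of the revelation order, all \(N\sum_{j>r}\lambda_j\) tests of larger types finished before \(T_X\). We charge work to the machines by time \(T_X\) as follows:
\begin{itemize}[nosep]
\item each of those prefix tests contributes one unit;
\item each completed job of type \(r\) contributes its test and processing, \(r+1\) units, and these tests are not in the prefix;
\item each completed job of type \(j>r\) contributes processing \(j=(r+1)+(j-r-1)\).
\end{itemize}
Only \(N\sum_{j=r}^{t-1}\lambda_j\) jobs have types in \(\{r,\dots,t-1\}\). So at least \((X-N\sum_{j=r}^{t-1}\lambda_j)^+\) of the completed jobs have type \(\ge t\), and each such job contributes an extra unit for every \(t\in\{r+2,\dots,K\}\) with \(t\le j\). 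Since total capacity is \(mT_X\), this yields the branch inequality of Section~\ref{sec:overview}, i.e.\ \(T_X\ge \frac Nm f_r(X/N)\) with
\begin{align*}
f_2(x)&=3x,\\
f_1(x)&=\alpha+2x+(x-\beta)^+,\\
f_0(x)&=\alpha+\beta+x+\bigl(x-(1-\alpha-\beta)\bigr)^++\bigl(x-(1-\alpha)\bigr)^+.
\end{align*}
Let \(g=\min\{f_0,f_1,f_2\}\). Since \(g\) is nondecreasing, Lemma~\ref{lem:tx-identity} gives
\[
\ALG=\sum_X T_X\ge \frac Nm\sum_{X=1}^N g(X/N)\ge \frac{N^2}{m}\int_0^1 g=mn^2\int_0^1 g .
\]

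\textbf{Evaluating the envelope and finishing.} We compute the breakpoints of \(g\) under \(0\le\alpha\le\beta\), \(\alpha+\beta\le1\). The branch \(f_2\) is active on \([0,\alpha]\), since \(f_1-f_2=\alpha-x\) there. Next, \(f_1\) is active until it crosses \(f_0\) at \(x=\beta\). On \([\beta,1]\), \(f_0\) is the smaller branch, because \(f_1-f_0\) equals \(x-\beta\) before its kinks and stays nonnegative afterwards. Summing the three pieces should give \(\int_0^1 g=\tfrac12+\alpha+\beta-\tfrac{\beta^2}{2}\). We then let \(n\to\infty\), divide by the bound on \(\OPT\), and maximise over \((\alpha,\beta)\) numerically to obtain about \(1.4811\) near \((0.1939,0.2873)\).

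\textbf{Main obstacle.} The delicate step is justifying the branch inequality rigorously. Tests of completed jobs of type \(>r\) must not be double-counted with the prefix. Tests and processing of the same job may run on different machines, which is harmless because we only use total capacity \(mT_X\). Ties in completion times must be handled, and \(r\) must be chosen correctly. A second, more routine point of care is checking that the envelope's active pieces are exactly as claimed throughout the parameter region, since \(\int_0^1 g\) must reproduce the stated numerator exactly.
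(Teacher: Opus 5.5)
Your overall route is the paper's. The paper's Lemma~\ref{lem:three-type-TX} is exactly the $K=2$ case of the $r$-branch bound, with its three cases being $r=2,1,0$, and your balanced SPT schedule for $\OPT$ is the one the paper uses. There is, however, a concrete error in the branch functions you display for $r=1$ and $r=0$, and as written they make the lower bound unsound.

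The terms $(x-\beta)^+$ in $f_1$ and $(x-(1-\alpha-\beta))^+$ in $f_0$ are corrections with $t=r+1$. Your own charging rule allows extra units only for $t\in\{r+2,\dots,K\}$, so it excludes them, and rightly so. A completed job of type $r+1$ already has its test in the prefix and needs only $r+1$ units of processing, which is exactly the baseline.

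The displayed inequalities are in fact false:
\begin{itemize}
\item Testing all $N$ jobs and then processing them keeps every machine busy. So $mT_N=(1+2\alpha+\beta)N+O(m)$, whereas your $f_0(1)=1+3\alpha+2\beta$.
\item Testing and processing only the first $(\alpha+\beta)N$ jobs reaches $X=(\alpha+\beta)N$ with smallest type $1$ and $mT_X=\alpha N+2X+O(m)$. This is below your $f_1(X/N)\,N=(4\alpha+2\beta)N$.
\end{itemize}
Had you actually integrated your displayed branches, you would have obtained a value larger than $\tfrac12+\alpha+\beta-\tfrac{\beta^2}{2}$, which is an invalid lower bound.

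The fix is $f_1(x)=\alpha+2x$ and $f_0(x)=\alpha+\beta+x+(x-(1-\alpha))^+$, which are the paper's cases~2 and~3. Your envelope comparison already uses these implicitly. With them, $f_1-f_0=x-\beta$ on $[\beta,1-\alpha]$ and $f_1-f_0=1-\alpha-\beta\ge 0$ on $[1-\alpha,1]$. With your displayed formulas, the first difference would instead be $2(x-\beta)$.

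You should also record two further comparisons:
\begin{itemize}
\item On $[0,\alpha]$, $f_0-f_2=\alpha+\beta-2x\ge\beta-\alpha\ge 0$. This is where $\alpha\le\beta$ is used.
\item On $[\beta,1]$, $f_0\le f_2$. The difference is at least $\beta-\alpha$ before the kink and at least $2-3\alpha-\beta\ge 0$ after it.
\end{itemize}

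The integral you left as ``should give'' is then
\[
\tfrac32\alpha^2+\bigl(\alpha\beta+\beta^2-2\alpha^2\bigr)+\Bigl[(\alpha+\beta)(1-\beta)+\tfrac{1-\beta^2}{2}+\tfrac{\alpha^2}{2}\Bigr]=\tfrac12+\alpha+\beta-\tfrac{\beta^2}{2}.
\]
With these corrections your argument coincides with the paper's. Passing to $\int_0^1 g$ via monotonicity, instead of summing the four pieces directly, is an inessential difference.
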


\begin{proof}
Fix parameters $\alpha,\beta$ satisfying $0\le \alpha\le \beta$ and $\alpha+\beta\le 1$. Let \(N=nm\).
%%yc ??
For notational simplicity, assume that the relevant quantities \(\alpha n\), \(\beta n\), and \((1-\alpha-\beta)n\) are integers.
% where \(n\) is chosen so that \(\alpha n\) and \(\beta n\) are integers. 
%%yc add??
The general case follows by rounding, which changes the objective values only by lower-order terms.
We construct an instance of $N$ jobs, all with test time $t_j=1$. 
%%kc
The adversary assigns processing times according to the order in which tests complete, revealing larger processing times earlier.
% To exploit the algorithm's lack of future information, the adversary forces the algorithm to accumulate completion times on the largest jobs first. 
%%yc
More precisely,
% Specifically, 
processing times are assigned according to the order in which tests complete,
%%yc
with simultaneous ties broken arbitrarily, so that shorter jobs are progressively revealed over time.
% (breaking simultaneous ties arbitrarily), progressively revealing shorter jobs as time goes on:
\begin{itemize}
    \item The first $\alpha N$ revealed jobs are of type $L$ (long) and have processing time $2$.
    \item The next $\beta N$ revealed jobs are of type $M$ (medium) and have processing time $1$.
    \item The remaining $(1-\alpha-\beta)N$ revealed jobs are of type $S$ (short) and have processing time $0$.
\end{itemize}

% For each $X\in\{1,2,\dots,N\}$, let $T_X$ denote the earliest time by which at least
% $X$ jobs have completed.
By Lemma~\ref{lem:tx-identity}, we have $\ALG=\sum_{X=1}^{N} T_X$. We first derive pointwise lower bounds on $T_X$.

\begin{lemma}\label{lem:three-type-TX}
For every $X\in\{1,2,\dots,N\}$,
\[
T_X \ge \frac{1}{m}
\begin{cases}
3X, & 1\le X\le \alpha N\\[1mm]
\alpha N + 2X, & \alpha N < X\le \beta N\\[1mm]
(\alpha+\beta)N + X, & \beta N < X\le (1-\alpha)N\\[1mm]
(2\alpha+\beta-1)N + 2X, & (1-\alpha)N < X\le N
\end{cases}
\]
\end{lemma}
\begin{proof}
Fix $X$ and consider the schedule up to time $T_X$.

\textbf{Case 1.} No type-$M$ or type-$S$ job has completed by time $T_X$.
Then all completed jobs are of type $L$.
Each such job requires one unit of testing and two units of processing, hence 
%%yc
three units of work.
% total work \(3\). 
Therefore, \(mT_X \ge 3X\).

\textbf{Case 2.} Some type-$M$ job has completed by time $T_X$, but no type-$S$ job has completed by time $T_X$.
Before the first type-$M$ job can be revealed, the algorithm must already have tested the first \(\alpha N\) jobs, using \(\alpha N\) units of work. After this prefix, each completed job requires at least two more units of work. A completed type-$L$ job requires its processing part of size \(2\), while a completed type-$M$ job requires one unit of testing and one unit of processing. Therefore, \(mT_X \ge \alpha N + 2X\).

\textbf{Case 3.} Some type-$S$ job has completed by time $T_X$.
Before a type-$S$ job can be revealed, the algorithm must already have tested the first
$\alpha N+\beta N$ jobs.
Thus,
$mT_X \ge (\alpha+\beta)N + X$.
This is the baseline bound in this case.
%%yc delete
% However, 
Among all jobs, the total number of non-$L$ jobs is
\[
\beta N + (1-\alpha-\beta)N = (1-\alpha)N.
\]
%%kc
Hence, if $X>(1-\alpha)N$, then among the $X$ completed jobs, at least
$X-(1-\alpha)N$ must be of type $L$.  The baseline term $X$ accounts for
one post-prefix unit of work for each completed job.  A completed type-$L$
job requires two post-prefix units of work, because its test has already
been counted in the prefix and its processing time is $2$.  Therefore each
such type-$L$ job 
%%yc
adds
% contributes
one additional unit beyond the baseline, and
we obtain $mT_X \ge (\alpha+\beta)N + X + \bigl(X-(1-\alpha)N\bigr)$.
% Hence, if $X>(1-\alpha)N$, then among the $X$ completed jobs, at least $X-(1-\alpha)N$ must be of type $L$. For each type-$L$ job, the baseline term denoted by ``$+X$'' represents only a single unit beyond the initial prefix $(\alpha+\beta)N$, whereas the completion of a type-$L$ job requires an additional two units following that prefix. Thus, each such type-$L$ job adds one extra unit, and we obtain
% $mT_X \ge (\alpha+\beta)N + X + \bigl(X-(1-\alpha)N\bigr)$.
Equivalently,
\[
mT_X \ge (2\alpha+\beta-1)N + 2X.
\]
Combining the above bounds yields
\[
mT_X \ge \min\Bigl\{3X,\;\alpha N+2X,\;(\alpha+\beta)N+X+\bigl(X-(1-\alpha)N\bigr)^+\Bigr\}.
\]
It remains to identify which expression is minimal on each interval.

If $1\le X\le \alpha N$, then
\[
\alpha N+2X \ge 3X,
\qquad
(\alpha+\beta)N+X \ge 2\alpha N + X \ge 3X,
\]
where the middle inequality uses $\alpha\le \beta$.
Hence,
$mT_X \ge 3X$.

If $\alpha N < X \le \beta N$, then the first case is excluded because there are only
$\alpha N$ type-$L$ jobs.
Moreover,
$
(\alpha+\beta)N + X \ge \alpha N + 2X
$
because $X\le \beta N$.
Hence,
$
mT_X \ge \alpha N + 2X.
$

If $\beta N < X \le (1-\alpha)N$, then the correction term is zero, and
$
\alpha N+2X \ge (\alpha+\beta)N+X
$
because $X\ge \beta N$.
Hence,
$mT_X \ge (\alpha+\beta)N + X$.

Finally, if $(1-\alpha)N < X\le N$, then the correction term is active, giving
\[
mT_X \ge (2\alpha+\beta-1)N + 2X.
\]
This proves the lemma.
\end{proof}

We now sum the bounds from Lemma~\ref{lem:three-type-TX}. Since $\ALG=\sum_{X=1}^{N}T_X$,
we obtain
\[
\ALG \ge \frac{1}{m}
\left(
\sum_{X=1}^{\alpha N} 3X
+
\sum_{X=\alpha N+1}^{\beta N} (\alpha N+2X)
+
\sum_{X=\beta N+1}^{(1-\alpha)N} ((\alpha+\beta)N+X)
+
\sum_{X=(1-\alpha)N+1}^{N} ((2\alpha+\beta-1)N+2X)
\right)
\]
A straightforward calculation yields $\ALG \ge m\left(\frac12+\alpha+\beta-\frac{\beta^2}{2}\right)n^2 + O(n)$.

For the offline optimum, we distribute the jobs evenly among the $m$ machines so that each machine receives exactly $\alpha n$ jobs of type $L$, $\beta n$ jobs of type $M$, and $(1-\alpha-\beta)n$ jobs of type $S$. Since the total sizes are respectively $3$, $2$, and $1$, the shortest-total-size order is $S \prec M \prec L$. 
On one machine, 
%%yc
the total completion time can be decomposed by job type as follows.
% the contributions of the three job types are
\begin{align*}
&S : \frac{(1-\alpha-\beta)^2}{2}n^2 + O(n)\\
&M :\beta(1-\alpha-\beta)n^2 + \beta^2 n^2 + O(n)\\
&L : \alpha(1-\alpha+\beta)n^2 + \frac32\alpha^2 n^2 + O(n)
\end{align*}
Hence, $\OPT \le m\left(\frac12+\alpha^2+\alpha\beta+\frac{\beta^2}{2}\right)n^2 + O(n)$. Therefore,
\[
\liminf_{n\to\infty}\frac{\ALG}{\OPT}
\ge
\frac{\frac12+\alpha+\beta-\frac{\beta^2}{2}}
     {\frac12+\alpha^2+\alpha\beta+\frac{\beta^2}{2}}.
\]
Maximizing over all feasible $(\alpha,\beta)$ proves the theorem.
\end{proof}

\subsection{A Multi-type Dyadic Lower Bound Approaching \texorpdfstring{{\boldmath $3/2$}}{3/2}}

We now present the main lower-bound construction. The guiding idea is to reveal processing times in many layers of decreasing magnitude.
%%yc
The dyadic proportions are
% using dyadic proportions 
chosen so that 
%%yc add
both
the lower-bound expression 
%%yc
for
% from
the online algorithm and the upper-bound structure of the offline optimum admit a clean recursive analysis. This multi-type construction is 
%%yc
where
% the point at which 
the $T_X$ framework becomes 
%%yc
particularly useful,
% genuinely stronger than the basic three-type case, 
leading to a deterministic lower bound approaching $3/2$.

\begin{definition}[The $K$-type dyadic adversarial family]
Fix an integer $K\ge 2$. Consider the obligatory-test scheduling problem on $m$ identical machines with $N=nm$ jobs, all having test time $1$. 
%%yc add??
We assume, for notational simplicity, that \(n\) is divisible by \(2^K\). The general case follows by rounding the number of jobs of each type, which affects all objective values only by lower-order terms.
The adversary defines $K+1$ job types.
%%yc delete
% where
A type-$i$ job,
%%yc add
where
$i\in\{0,1,\dots,K\}$,
has processing time $i$. 
%%yc
The family is defined as follows.
% This adversarial family is defined by the following rules:
\begin{itemize}
    \item \textbf{Type Proportions:} The proportion of type-$i$ jobs is 
    %%yc delete
    % given by
    \[
    % \lambda_0=\frac12,\qquad
    \lambda_i=\frac{1}{2^{i+1}}\quad \text{for } 0\le i\le K-1,\qquad
    \lambda_K=\frac{1}{2^K},
    \]
    so that $\sum_{i=0}^K \lambda_i=1$.
    \item \textbf{Assignment Rule:} The adversary assigns processing times dynamically according to the exact order in which tests complete,
    %%yc
    with simultaneous ties broken arbitrarily.
    % (breaking simultaneous ties arbitrarily).
    \item \textbf{Revelation Sequence:} The adversary reveals the job types in strictly decreasing order of processing time. The first $\lambda_K N$ tested jobs are revealed to be of type $K$, the next $\lambda_{K-1}N$ jobs are revealed to be of type $K-1$, and so on, until the final $\lambda_0N$ tested jobs are revealed to be of type $0$.
\end{itemize}
\end{definition}

\begin{theorem}\label{thm:multi-type-dyadic-lb}
For every integer $K\ge 2$, under the $K$-type dyadic adversarial family defined above, every deterministic online algorithm satisfies
\[
\liminf_{n\to\infty}\frac{\ALG}{\OPT}
\ge
\frac{3(2-4^{-K})}{4-4^{-K}}.
\]
In particular, as $K\to\infty$, this lower bound tends to $\frac32$.
\end{theorem}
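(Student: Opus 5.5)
We follow the template of Theorem~\ref{thm:three-type-lb}: lower-bound every completion threshold $T_X$, sum them via Lemma~\ref{lem:tx-identity}, and compare with an explicit offline schedule.

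\textbf{Step 1 (pointwise branch inequality).} Fix $X$ and let $r$ be the smallest type among the first $X$ completed jobs. The adversary reveals types in decreasing order. So before any type-$r$ job is revealed, the tests of all jobs of types $r+1,\dots,K$ have finished, which costs $N\sum_{j>r}\lambda_j$ units of machine time. Beyond this prefix, we charge each completed job as follows:
\begin{itemize}
\item a completed job of type $i>r$ needs its processing part, $i\ge r+1$ units;
\item a completed job of type $r$ needs its own test plus processing, $1+r$ units.
\end{itemize}
This gives the baseline $(r+1)X$. For each $t\ge r+2$, only $N\sum_{j=r}^{t-1}\lambda_j$ jobs have type in $[r,t-1]$, so at least $(X-N\sum_{j=r}^{t-1}\lambda_j)^+$ completed jobs have type $\ge t$. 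Each such job pays one extra unit per threshold $t$ it exceeds. Since all of this work is done on $m$ machines by time $T_X$, we get the branch inequality stated in Section~\ref{sec:overview}: $mT_X\ge N f_r(X/N)$. Taking the minimum over $r$ gives $T_X\ge (N/m)\,g_K(X/N)$. The counting must be done carefully: no unit of work may be charged twice, and work on incomplete jobs is simply discarded, which is valid for a lower bound.

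\textbf{Step 2 (evaluating the envelope).} We compare the affine parts $b_r(x)=2^{-(r+1)}+(r+1)x$ using $b_{r+1}-b_r=x-2^{-(r+2)}$. We then check that on the interval where $f_r$ is claimed active, all correction terms of $f_r$ vanish (because $x\le\sum_{j=r}^{r+1}\lambda_j$ there) and the other branches are at least $b_r$. We only need the inequality $g_K\ge$ (piecewise-linear function $h$), not equality. Then $g_K$ is nondecreasing, so $\sum_X g_K(X/N)\ge N\int_0^1 h$. The integral is a geometric sum. On $[2^{-(r+2)},2^{-(r+1)}]$ the contribution is $2^{-(r+1)}\cdot 2^{-(r+2)}+(r+1)\cdot\tfrac{3}{2}\cdot 4^{-(r+2)}$. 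Summing these with the end pieces $f_0$ on $[1/4,1]$ and $f_K$ on $[0,2^{-K}]$ should give $1-\tfrac{1}{2\cdot4^K}$. Hence $\ALG\ge m(1-\tfrac{1}{2\cdot4^K})n^2-O(n)$. I expect this step to be the main obstacle: verifying that no correction term from a different branch dips below the claimed envelope, especially near $x=1$, where the corrections of $f_0$ might matter. If $f_0$ picks up positive corrections on $[1/4,1]$, the envelope is only larger, which is fine; the real issue is confirming the claimed value is a valid lower bound.

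\textbf{Step 3 (offline upper bound).} Assign each machine $\lambda_i n$ jobs of each type, and schedule each job's test immediately followed by its processing, in SPT order of total size $i+1$. Per machine, the cost is $\sum_i \lambda_i n\,(\text{prefix load}+\tfrac{(i+1)\lambda_i n}{2})$ up to $O(n)$. Peeling off type $0$ gives the recurrence $B_K=\tfrac12+\tfrac14B_{K-1}$. Here the type-0 block contributes $\tfrac12\cdot\tfrac12$ from its own jobs, and each remaining job is delayed by $\tfrac12$, which yields the self-similar structure after shifting types. The base case is $B_1=\tfrac58$. Solving gives $B_K=\tfrac23-\tfrac{1}{6\cdot4^K}$, so $\OPT\le mB_Kn^2+O(n)$. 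Dividing the two bounds and letting $n\to\infty$ yields the claimed ratio $\tfrac{3(2-4^{-K})}{4-4^{-K}}$, which tends to $\tfrac32$.
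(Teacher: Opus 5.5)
Your route is the paper's: branch inequality, envelope $g_K$, monotone Riemann-sum comparison, and the peeled SPT schedule for $\OPT$. Step~1 is sound, and Step~3 reaches the correct recurrence. The gap is in Step~2 on $[1/4,1]$, which is exactly where you stop.

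Your generic check, that ``all correction terms of $f_r$ vanish on its active interval'', is false for $r=0$. The thresholds of $f_0$ are $1-2^{-t}$ for $2\le t\le K$, so its corrections are active on $[3/4,1]$. Your fallback, that positive corrections only make the envelope larger, points the wrong way for what you need. Since $f_s\ge b_s\ge b_1\ge b_0$ on $[1/4,1]$ for every $s\ge 1$, it only certifies $g_K\ge b_0=\frac12+x$ there. Integrating that gives $A_K\ge\frac{23}{24}-\frac{1}{3\cdot 4^K}$, so the ratio tends to $\frac{23}{16}$, not $\frac32$. The missing $\frac1{24}-\frac{1}{6\cdot 4^K}$ is exactly $\sum_{t=2}^{K}\int_{1-2^{-t}}^{1}\bigl(x-(1-2^{-t})\bigr)\,dx$, the integral of $f_0$'s corrections. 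You may keep it only after proving that these corrections never lift $f_0$ above another branch.

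That is the content of Lemma~\ref{lem:f0-dominates}. By the chain above, it suffices to show $f_0\le b_1$ on $[1/4,1]$. The paper does this piecewise: on $[1-2^{-q},1-2^{-(q+1)})$ one has $f_0=qx-q+2-2^{-q}$ and $b_1-f_0\ge\frac14+q2^{-(q+1)}>0$. A shorter argument also works. Since $f_0$ is convex, $b_1-f_0$ is concave, so on $[1/4,1]$ it is bounded below by its endpoint values. These are $0$ at $x=\frac14$ and $\frac14+2^{-K}$ at $x=1$, using $f_0(1)=2-2^{-K}$ and $b_1(1)=\frac94$.

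With this in place, the envelope integrates as follows:
\begin{itemize}
\item on $[0,\frac14]$ it gives $\frac{11}{96}-\frac{1}{3\cdot 4^K}$, from your intervals for $1\le r\le K-2$ plus $f_K$ on $[0,2^{-K}]$ (note that $f_{K-1}$ is never active);
\item on $[\frac14,1]$ it gives $\frac{85}{96}-\frac{1}{6\cdot 4^K}$.
\end{itemize}
The sum is $1-\frac{1}{2\cdot 4^K}$, which you assert but do not compute.

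A minor point on Step~3: write out the peeling bookkeeping explicitly, since your ``$\frac12\cdot\frac12$ from its own jobs'' is ambiguous. The correct terms are $\frac18$ for the type-$0$ jobs themselves, $\frac14$ for their delay of the remaining $n/2$ jobs, $\frac18$ for restoring the removed unit to each remaining job, and $\frac14B_{K-1}$ for the residual instance.
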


The proof has four steps. First, for each completion threshold $T_X$, we lower-bound the amount of work required when the smallest completed type is $r$. Second, this yields a lower envelope of the form $T_X \ge \frac{N}{m} g_K(X/N)$. Third, we integrate this lower envelope to obtain an explicit lower bound on $\ALG$. Finally, we upper-bound $\OPT$ by a recursive balanced schedule whose cost satisfies a simple recurrence. 

To prove Theorem~\ref{thm:multi-type-dyadic-lb}, we analyze the completion thresholds $T_X$. 
%%yc
Define
% Defining
the normalized index $x:=X/N\in[0,1]$.
Lemma~\ref{lem:tx-identity} gives $\ALG=\sum_{X=1}^{N}T_X$. We first derive a lower bound on $T_X$ depending on the smallest type among the jobs completed by time $T_X$.

\begin{lemma}\label{lem:general-r-branch}
Fix $X\in\{1,\dots,N\}$, and suppose that by time $T_X$ the smallest type among the completed jobs is exactly $r$, where $0\le r\le K$.
Then $mT_X
\ge
N\sum_{j=r+1}^{K}\lambda_j
+
(r+1)X
+
\sum_{t=r+2}^{K}
\left(
X - N\sum_{j=r}^{t-1}\lambda_j
\right)^+$.
\end{lemma}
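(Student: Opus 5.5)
We argue by pure work accounting over the interval $[0,T_X]$. The total machine capacity available up to that time is $mT_X$. The goal is to exhibit a set of distinct operations, all completed by time $T_X$, whose total length is at least the right-hand side.

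\textbf{Step 1: the test prefix.} Some completed job has type $r$. By the revelation sequence, a job receives type $r$ only after the first $N\sum_{j=r+1}^{K}\lambda_j$ tests (in order of test completion) have finished; ties are broken arbitrarily, but this does not change the count. These tests belong to the jobs of types $r+1,\dots,K$, each of unit length. So they contribute $N\sum_{j=r+1}^K\lambda_j$ units of completed work.

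\textbf{Step 2: the per-job baseline.} Fix the $X$ completed jobs; each has type $s\ge r$. We charge each one a disjoint amount of work not already counted in the prefix:
\begin{itemize}
\item If $s=r$, its test is not in the prefix, so it contributes its test plus processing, $1+r=r+1$.
\item If $s\ge r+1$, its test is in the prefix, so it contributes only its processing time $s\ge r+1$.
\end{itemize}
These charges are disjoint operations and are disjoint from the prefix tests. Hence every completed job contributes at least $r+1$, giving the term $(r+1)X$. More precisely, a completed job of type $s\ge r$ contributes at least $r+1+(s-r-1)^+$; note that for $s=r$ this correction is $0$.

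\textbf{Step 3: the correction sum.} The excess over the baseline is $\sum_{\text{completed}}(s-r-1)^+$. Rewrite it as
\[
\sum_{t=r+2}^{K}\#\{\text{completed jobs of type}\ge t\}.
\]
For each such $t$, the completed jobs of type $<t$ have type in $\{r,\dots,t-1\}$, so there are at most $N\sum_{j=r}^{t-1}\lambda_j$ of them. Therefore at least $\bigl(X-N\sum_{j=r}^{t-1}\lambda_j\bigr)^+$ completed jobs have type at least $t$. Summing over $t$ gives exactly the correction term in the statement.

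\textbf{Step 4: conclusion.} Since all counted operations finish by $T_X$ and at most $m$ run at any time, $mT_X$ is at least the total counted work. This yields the claimed inequality.

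The main obstacle is the disjointness bookkeeping in Step 2: a type-$r$ job's test must not be double-counted with the prefix, and higher-type jobs must not be charged their test again. The split into $s=r$ and $s>r$ handles both. A secondary check is the edge case $r=K$, where both the prefix and the correction sum are empty, and the bound reduces to $(K+1)X$.
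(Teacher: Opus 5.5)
Your proof is correct and is essentially the paper's argument: the same unit-test prefix for types $r+1,\dots,K$, the same baseline of $r+1$ post-prefix units per completed job (split into $s=r$ and $s>r$), and the same per-$t$ count of completed jobs of type at least $t$, all bounded by the total capacity $mT_X$. Your Step~3 identity, that $(s-r-1)^+$ equals the number of $t\in\{r+2,\dots,K\}$ with $t\le s$, makes explicit why the correction terms for different $t$ can be summed, which the paper states without justification.
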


\begin{proof}
Fix $X$ and assume that among the jobs completed by time $T_X$, the minimum type is $r$.
%%yc delete
% First, 
Before a type-$r$ job can be revealed, the algorithm must already have tested all jobs of types $r+1,r+2,\dots,K$. Therefore, the schedule must spend at least $N\sum_{j=r+1}^{K}\lambda_j$ units of work on these tests before any completed type-$r$ job can exist.

Now consider the $X$ completed jobs themselves. Every such job has type at least $r$. After accounting for these prefix tests, each completed job still requires at least $r+1$ additional units of work.
\begin{itemize}
    \item A completed type-$r$ job still needs its own test and its processing part, for a total length of $r+1$.
    \item A completed type-$s$ job with $s>r$ has already had its test counted in the prefix,
    but still requires processing time $s\ge r+1$.
\end{itemize}
Hence, we obtain the baseline term $(r+1)X$. This baseline can be strengthened when the first \(X\) completed jobs cannot all come from types \(r,r+1,\dots,t-1\). Fix $t\in\{r+2,\dots,K\}$. The total number of jobs of types $r,r+1,\dots,t-1$ is $N\sum_{j=r}^{t-1}\lambda_j$.
%%yc delete
% Therefore, 
If $X > N\sum_{j=r}^{t-1}\lambda_j$, then among the first $X$ completed jobs, at least $X - N\sum_{j=r}^{t-1}\lambda_j$ must have type at least $t$. For each such job, the baseline term accounts for only $r+1$ units, whereas a completed type-$t$ or higher job requires at least one more unit.
%%yc delete
% beyond this.
Thus, we obtain the correction term $\left( X - N\sum_{j=r}^{t-1}\lambda_j \right)^+$. Summing the baseline and all correction terms proves the claim.
\end{proof}

\begin{definition}
For each \(r\in\{0,1,\dots,K\}\), define
$
b_r(x):=(r+1)x+\sum_{j=r+1}^{K}\lambda_j
$ and
$
f_r(x)
:=
b_r(x)
+
\sum_{t=r+2}^{K}
\left(
x-\sum_{j=r}^{t-1}\lambda_j
\right)^+ .
$
When \(r\ge K-1\), this
%%yc add
final summation is interpreted as zero. Finally, define
$
g_K(x):=\min_{0\le r\le K} f_r(x).
$
\end{definition}

\begin{corollary}\label{cor:TX-envelope}
For every $X\in\{1,\dots,N\}$, $T_X \ge \frac{N}{m}\, g_K\!\left(\frac{X}{N}\right)$.
\end{corollary}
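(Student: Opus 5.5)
The plan is to derive the corollary directly from Lemma~\ref{lem:general-r-branch} by normalizing and then taking the minimum over all possible values of the smallest completed type. Fix $X\in\{1,\dots,N\}$ and consider the schedule at time $T_X$. By definition of $T_X$, at least $X$ jobs have completed by then, so the set of completed jobs is nonempty. Every job has some type in $\{0,1,\dots,K\}$, so the minimum type $r$ among the completed jobs is well defined and lies in $\{0,\dots,K\}$. Lemma~\ref{lem:general-r-branch} then applies with this particular $r$.

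Next I will divide the inequality of Lemma~\ref{lem:general-r-branch} by $N$ and set $x=X/N$. The prefix term becomes $\sum_{j=r+1}^K\lambda_j$ and the baseline becomes $(r+1)x$; together these are exactly $b_r(x)$. For each correction term I will use the positive homogeneity of $(\cdot)^+$: for $N>0$,
\[
\frac{1}{N}\Bigl(X-N\sum_{j=r}^{t-1}\lambda_j\Bigr)^+=\Bigl(x-\sum_{j=r}^{t-1}\lambda_j\Bigr)^+ .
\]
This turns the right-hand side into $N f_r(x)$, so $mT_X\ge N f_r(x)$. When $r\ge K-1$ the correction sum in the lemma is empty, and this matches the convention in the definition of $f_r$ that the final summation is zero.

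Finally, I will use that $r$ is one of the indices $0,\dots,K$, so $f_r(x)\ge \min_{0\le r'\le K}f_{r'}(x)=g_K(x)$. This gives $T_X\ge \frac{N}{m}g_K(X/N)$. The algorithm's behaviour determines which $r$ actually occurs, but the bound holds whichever it is, and that is why the envelope uses the minimum over all branches.

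I expect no step to be a genuine obstacle. The only points needing care are the two just noted: that $r$ exists because at least one job is completed at $T_X$, and that the indexing convention for the empty correction sum agrees with the lemma.
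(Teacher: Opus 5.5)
Your proposal is correct and follows the paper's implicit argument exactly: take $r$ to be the smallest type completed by time $T_X$, which is well defined since at least $X\ge 1$ jobs are done, and apply Lemma~\ref{lem:general-r-branch}. Dividing by $N$ and using positive homogeneity of $(\cdot)^+$ gives $mT_X\ge N f_r(X/N)$, and bounding $f_r$ below by $g_K=\min_{r'}f_{r'}$ finishes the proof; your check that the empty correction sum for $r\ge K-1$ matches the convention in the definition of $f_r$ is the only bookkeeping needed, and you handle it correctly.
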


% We now specialize the branches $f_r$ to the dyadic choice of the proportions. Recall that 
% $\lambda_0=\frac12$,
% $\lambda_i=\frac1{2^{i+1}}$ for all $1\le i\le K-1$, and
% $\lambda_K=\frac1{2^K}$. Then $f_r(x) = b_r(x)+\sum_{t=r+2}^{K}\Bigl(x-\sum_{j=r}^{t-1}\lambda_j\Bigr)^+$
% for
% $\ 0\le r\le K$,  with the convention that the summation over \(t\) is zero when $r=K-1$ or $r=K$.
% In the dyadic case, this becomes $b_r(x)=2^{-(r+1)}+(r+1)x$
% for
% $0\le r\le K-1$,
% and 
% $b_K(x)=(K+1)x$. 
% Moreover, for every \(0\le q\le K-2\), $b_{q+1}(x)-b_q(x)=x-2^{-(q+2)}$. The final transition is slightly different: $b_K(x)-b_{K-1}(x)=x-2^{-K}$.

We now specialize these quantities to the dyadic choice of proportions. Recall that
$
    \lambda_i=\frac1{2^{i+1}}\quad(0\le i\le K-1),
    \lambda_K=\frac1{2^K}.
$
Then
$
    b_r(x)=2^{-(r+1)}+(r+1)x,\  (0\le r\le K-1),
$
while
$
    b_K(x)=(K+1)x.
$
Moreover, for every \(0\le q\le K-2\),
$
    b_{q+1}(x)-b_q(x)=x-2^{-(q+2)},
$
and the final transition is
$
    b_K(x)-b_{K-1}(x)=x-2^{-K}.
$
\begin{lemma}\label{lem:dyadic-left-envelope}
For the dyadic family above, the following hold.
\begin{enumerate}
    \item On the interval $0\le x\le 2^{-K}$, we have $g_K(x)=f_K(x)=(K+1)x$.
    \item For every integer $r$ with $1\le r\le K-2$, on the interval $I_r:=\bigl[2^{-(r+2)},\,2^{-(r+1)}\bigr]$, we have $g_K(x)=f_r(x)=2^{-(r+1)}+(r+1)x$.
\end{enumerate}
\end{lemma}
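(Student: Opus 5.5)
The plan is to sandwich \(g_K\) between the affine parts \(b_s\): the correction sums in each \(f_s\) are nonnegative, so \(f_s\ge b_s\) for every \(s\). It then suffices to show two things on each interval. First, the correction terms of the claimed active branch vanish there, so that \(f_r=b_r\). Second, \(b_r\le b_s\) for every \(s\in\{0,\dots,K\}\). Then \(g_K=\min_s f_s\ge\min_s b_s=b_r=f_r\ge g_K\), which forces equality. The comparison of the \(b_s\) will be done by telescoping the adjacent differences computed just before the lemma:
\[
b_{q+1}(x)-b_q(x)=x-2^{-(q+2)}\quad(0\le q\le K-2),
\qquad
b_K(x)-b_{K-1}(x)=x-2^{-K}.
\]

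For part~1, \(f_K\) has an empty correction sum by convention, so \(f_K=b_K=(K+1)x\). For \(s<K\) write \(b_s-b_K=-\sum_{q=s}^{K-1}(b_{q+1}-b_q)\). On \(0\le x\le 2^{-K}\), each summand with \(q\le K-2\) equals \(x-2^{-(q+2)}\le 2^{-K}-2^{-K}\le 0\), because \(q+2\le K\). The last summand \(x-2^{-K}\) is also \(\le 0\). Hence \(b_s\ge b_K\), and the sandwich gives \(g_K=f_K\).

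For part~2, fix \(1\le r\le K-2\) and \(x\in I_r\). The correction terms of \(f_r\) have thresholds \(\sum_{j=r}^{t-1}\lambda_j\) for \(t\ge r+2\), and these increase with \(t\). The smallest one is \(\lambda_r+\lambda_{r+1}\ge 2^{-(r+1)}\ge x\), where we use \(r+1\le K-1\) so that \(\lambda_{r+1}\) has the generic dyadic form (or, when \(r+1=K-1\), is still at least \(0\)). So every \((\cdot)^+\) term vanishes, and \(f_r=b_r=2^{-(r+1)}+(r+1)x\).

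Next compare \(b_s\) with \(b_r\) in two cases.
\begin{itemize}
\item For \(s>r\), \(b_s-b_r\) is a sum of terms \(x-2^{-(q+2)}\) with \(q\ge r\), possibly together with the final term \(x-2^{-K}\). Each is \(\ge 0\), since \(x\ge 2^{-(r+2)}\ge 2^{-(q+2)}\) and \(2^{-K}\le 2^{-(r+2)}\).
\item For \(s<r\), \(b_r-b_s=\sum_{q=s}^{r-1}\bigl(x-2^{-(q+2)}\bigr)\). Here \(q\le r-1\le K-3\), so only the generic formula appears, and each term is \(\le 2^{-(r+1)}-2^{-(r+1)}\le 0\).
\end{itemize}
Thus \(b_r=\min_s b_s\), and the sandwich completes the proof.

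The main point requiring care is bookkeeping: making sure the special final transition \(b_K-b_{K-1}\) and the special value \(\lambda_K\) are handled correctly in each telescoped sum. No step should be genuinely hard once the reduction \(f_s\ge b_s\) is in place.
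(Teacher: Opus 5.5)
Your proof is correct and follows essentially the same route as the paper. Both show that the correction terms of the claimed branch vanish, so $f_r=b_r$, and then use $f_s\ge b_s$ together with the telescoped adjacent differences $b_{q+1}-b_q$, treating the final transition $b_K-b_{K-1}$ separately, to get $b_r\le b_s$ for every $s$. Your parenthetical about the case $r+1=K-1$ is unnecessary: $\lambda_{K-1}=2^{-K}$ already has the generic form, and the single bound $\lambda_r=2^{-(r+1)}\ge x$ suffices anyway.
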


\begin{proof}
We first note that for every \(0\le s\le K-1\),
\[
    \sum_{j=s+1}^{K}\lambda_j = 2^{-(s+1)}.
\]
Hence, whenever no correction term is active in \(f_s\), we have
\[
    f_s(x)=b_s(x)=2^{-(s+1)}+(s+1)x.
\]
We start with part~(2). Fix \(r\in\{1,\dots,K-2\}\) and let
\[
    I_r=\bigl[2^{-(r+2)},\,2^{-(r+1)}\bigr].
\]
%%yc
For every \(t\) with \(r+2\le t\le K\),
% For every \(K\ge t\ge r+2\),
\[
\sum_{j=r}^{t-1}\lambda_j
=
\sum_{j=r}^{t-1}2^{-(j+1)}
=
2^{-r}-2^{-t}
\ge
2^{-r}-2^{-(r+2)}
=
3\cdot 2^{-(r+2)}
>
2^{-(r+1)}.
\]
Therefore, for every \(x\in I_r\) and every 
%%yc
\(t\) with \(r+2\le t\le K\),
% \(K\ge t\ge r+2\),
\[
    x\le 2^{-(r+1)}<\sum_{j=r}^{t-1}\lambda_j.
\]
%%yc 
Thus, every correction term in \(f_r\) is zero. Hence,
\[
    f_r(x)=b_r(x)=2^{-(r+1)}+(r+1)x.
\]

We now show that \(f_r(x)\le f_s(x)\) for every \(s\neq r\).
If \(s<r\), then all correction terms in \(f_s\) are nonnegative, so
\(f_s(x)\ge b_s(x)\). Also,
\[
f_r(x)-b_s(x)
=
b_r(x)-b_s(x)
=
\sum_{q=s}^{r-1}\bigl(b_{q+1}(x)-b_q(x)\bigr)
=
\sum_{q=s}^{r-1}\bigl(x-2^{-(q+2)}\bigr).
\]
Since \(q\le r-1\), we have \(2^{-(q+2)}\ge 2^{-(r+1)}\ge x\) for every
\(x\in I_r\). Hence every summand is nonpositive, and therefore
\[
    f_r(x)\le b_s(x)\le f_s(x).
\]

If \(s>r\), then again all correction terms in \(f_s\) are nonnegative, so
\(f_s(x)\ge b_s(x)\). We show that \(b_s(x)\ge b_r(x)\).
First suppose that \(s\le K-1\). Then
\[
b_s(x)-b_r(x)
=
\sum_{q=r}^{s-1}\bigl(b_{q+1}(x)-b_q(x)\bigr)
=
\sum_{q=r}^{s-1}\bigl(x-2^{-(q+2)}\bigr).
\]
Since \(q\ge r\), we have \(2^{-(q+2)}\le 2^{-(r+2)}\le x\) for every
\(x\in I_r\). Hence every summand is nonnegative, and so
\(b_s(x)\ge b_r(x)\).

It remains to consider \(s=K\). In this case,
\[
b_K(x)-b_r(x)
=
\sum_{q=r}^{K-2}\bigl(x-2^{-(q+2)}\bigr)
+
\bigl(x-2^{-K}\bigr).
\]
The terms in the first sum are nonnegative by the same argument as above.
Moreover, since \(r\le K-2\) and \(x\in I_r\), we have
\[
    x\ge 2^{-(r+2)}\ge 2^{-K}.
\]
Thus the final term \(x-2^{-K}\) is also nonnegative. Hence
\(b_K(x)\ge b_r(x)\).

In both cases \(b_s(x)\ge b_r(x)=f_r(x)\), and therefore
\[
    f_r(x)\le b_s(x)\le f_s(x).
\]
Thus, \(f_r\) is the minimum branch on \(I_r\), which proves part~(2).

For part~(1), fix \(x\in[0,2^{-K}]\). Since
\[
    f_K(x)=b_K(x)=(K+1)x,
\]
it suffices to show that \(f_K(x)\le f_s(x)\) for every \(0\le s\le K-1\).
For each such \(s\), all correction terms in \(f_s\) are nonnegative, so
\(f_s(x)\ge b_s(x)\). Also,
\[
f_K(x)-b_s(x)
=
b_K(x)-b_s(x)
=
\sum_{q=s}^{K-2}\bigl(x-2^{-(q+2)}\bigr)
+
\bigl(x-2^{-K}\bigr).
\]
For every \(q\le K-2\), we have
\[
    2^{-(q+2)}\ge 2^{-K}\ge x.
\]
Also \(x-2^{-K}\le 0\). Hence every term in the displayed expression is
nonpositive, and therefore
\[
    f_K(x)\le b_s(x)\le f_s(x).
\]
Thus, \(f_K\) is the minimum branch on \([0,2^{-K}]\), proving part~(1).
\end{proof}

\begin{lemma}\label{lem:f0-dominates}
On the interval $\frac14 \le x \le 1$, we have $g_K(x)=f_0(x)$, where $f_0(x)=\frac12+x+\sum_{t=2}^{K}\bigl(x-(1-2^{-t})\bigr)^+$.
\end{lemma}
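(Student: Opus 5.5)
The plan is to compare $f_0$ with each other branch $f_s$ ($1\le s\le K$) on $[1/4,1]$. All branches are continuous and piecewise linear, so it suffices to check two things: $f_s(1/4)\ge f_0(1/4)$, and the slope of $f_s$ is at least the slope of $f_0$ almost everywhere on $[1/4,1]$. Integrating the slope inequality from $1/4$ then gives $f_s(x)\ge f_0(x)$ on the whole interval, hence $g_K=f_0$ there.

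First we confirm the stated form of $f_0$. For $t\le K$ we have $\sum_{j=0}^{t-1}\lambda_j=\sum_{j=0}^{t-1}2^{-(j+1)}=1-2^{-t}$, and $b_0(x)=\tfrac12+x$. This gives exactly $f_0(x)=\tfrac12+x+\sum_{t=2}^{K}\bigl(x-(1-2^{-t})\bigr)^+$.

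\emph{Left endpoint.} At $x=1/4$, every correction threshold of $f_0$ satisfies $1-2^{-t}\ge 3/4>1/4$, so $f_0(1/4)=b_0(1/4)$. For any $s$, the corrections are nonnegative, so $f_s(1/4)\ge b_s(1/4)$. By the telescoping identities stated before Lemma~\ref{lem:dyadic-left-envelope}, $b_s(x)-b_0(x)$ is a sum of terms $x-2^{-(q+2)}$ with $q\ge 0$ (plus the final term $x-2^{-K}$ when $s=K$). Each of these is nonnegative for $x\ge 1/4$. Hence $f_s(1/4)\ge f_0(1/4)$.

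\emph{Slopes.} On $[1/4,1]$ the slope of $f_0$ is
\[
1+\#\{t\in\{2,\dots,K\}: 2^{-t}>1-x\},
\]
which is at most $K$. The slope of $f_s$ is
\[
s+1+\#\{t\in\{s+2,\dots,K\}: x>2^{-s}-2^{-t}\},
\]
and we treat three cases.
\begin{enumerate}
\item For $s=K$ (and likewise $s=K-1$), the slope of $f_s$ is at least $K$, which suffices.
\item For $2\le s\le K-2$, every threshold satisfies $2^{-s}-2^{-t}<1/4\le x$. So all $K-s-1$ corrections are active and the slope equals $K$.
\item For $s=1$, the thresholds are $1/2-2^{-t}$ for $t\ge 3$. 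If a term of $f_0$ with index $t\ge 3$ is active, then $x>1-2^{-t}>1/2-2^{-t}$, so the term of $f_1$ with the same index is active too. The only term of $f_0$ left unmatched is $t=2$, and it is compensated by the extra $+1$ in the base slope of $f_1$ (namely $2$ versus $1$). Hence the slope of $f_1$ is at least the slope of $f_0$.
\end{enumerate}

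The main obstacle is the $s=1$ case, where the active correction sets of the two branches must be matched index by index rather than bounded crudely. The other cases reduce to counting. One bookkeeping point remains: the breakpoints are finitely many, so an almost-everywhere slope comparison together with continuity suffices.
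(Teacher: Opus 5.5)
Your proof is correct, but it takes a different route from the paper's. The paper never uses the correction terms of the competing branches. It proves the chain $f_0(x)\le b_1(x)\le b_r(x)\le f_r(x)$ for every $r\ge 1$. The telescoping differences give $b_r\ge b_1$ on $[1/4,1]$. Then $b_1(x)=\frac14+2x$ is compared with $f_0$ directly: on each piece $[1-2^{-q},1-2^{-(q+1)})$ the paper writes $f_0(x)=qx-q+2-2^{-q}$ and evaluates the nonincreasing affine difference at the right endpoint. This produces explicit margins $x-\frac14$, $\frac14+q\,2^{-(q+1)}$ and $\frac14+2^{-K}$.

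You instead show that each difference $f_s-f_0$ is nonnegative at $x=\frac14$ and nondecreasing on $[1/4,1]$. This avoids computing $f_0$ in closed form on every dyadic piece. The price is that you must keep the correction terms of $f_s$: the slope $s+1$ of $b_s$ alone falls below the slope of $f_0$, which reaches $K$ near $x=1$. That is why you need the index-by-index matching for $s=1$.

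The paper's route gives a stronger, quantitative conclusion: $f_0$ lies below the affine part $b_1$, and hence below every $b_r$ with $r\ge 1$, with explicit slack. Your route gives a different structural fact, that every competing branch pulls away from $f_0$ monotonically on $[1/4,1]$, and it needs less explicit computation.
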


\begin{proof}
By definition of the dyadic proportions,
\[
\sum_{j=1}^{K}\lambda_j=\frac12,
\qquad
\sum_{j=0}^{t-1}\lambda_j = 1-2^{-t}
\quad (t\ge 2).
\]
Therefore the formula for $f_0$ is immediate. We must show that
\[
f_0(x)\le f_r(x)
\qquad
\text{for every }r\ge 1\text{ and every }x\in[1/4,1].
\]
For every $r\ge 1$, all correction terms in $f_r$ are nonnegative, so
$
f_r(x)\ge b_r(x).
$
We next show that, on \([1/4,1]\), the affine parts \(b_r\) are minimized at
\(r=1\) among all \(r\ge1\).  First let \(2\le r\le K-1\). Then
\[
b_r(x)-b_1(x)
=
\sum_{q=1}^{r-1}\bigl(b_{q+1}(x)-b_q(x)\bigr)
=
\sum_{q=1}^{r-1}\bigl(x-2^{-(q+2)}\bigr).
\]
If \(x\ge 1/4\), then for every \(q\ge1\),
\[
2^{-(q+2)}\le 2^{-3}<\frac14\le x,
\]
Thus every summand is nonnegative. Hence \(b_r(x)\ge b_1(x)\) for
\(2\le r\le K-1\).

It remains to compare \(b_K\) with \(b_1\). We have
\[
b_K(x)-b_1(x)
=
\sum_{q=1}^{K-2}\bigl(x-2^{-(q+2)}\bigr)
+
\bigl(x-2^{-K}\bigr).
\]
The first sum is nonnegative by the same argument, and
\(x-2^{-K}\ge0\) for \(x\in[1/4,1]\). Therefore
\[
b_r(x)\ge b_1(x)
\qquad
\text{for all }r\ge2\text{ and }x\in[1/4,1].
\]
First consider \(x\in[1/4,3/4)\).  In this range no correction term in \(f_0\) is active, and hence
\[
    f_0(x)=\frac12+x.
\]
Therefore
\[
    b_1(x)-f_0(x)
    =
    \left(\frac14+2x\right)-\left(\frac12+x\right)
    =
    x-\frac14
    \ge 0.
\]

Next suppose that \(x\in [3/4,1-2^{-K})\).  This case is nonempty only when \(K\ge3\).  Let \(q\) be the unique integer with
\(2\le q\le K-1\) such that
\[
    1-2^{-q}\le x<1-2^{-(q+1)}.
\]
Then exactly the terms with \(2\le t\le q\) are active in \(f_0\).  Hence
\[
    f_0(x)
    =
    \frac12+x+\sum_{t=2}^{q}\bigl(x-(1-2^{-t})\bigr)
    =
    qx-q+2-2^{-q}.
\]
Thus
\[
    b_1(x)-f_0(x)
    =
    \frac14+2x-\bigl(qx-q+2-2^{-q}\bigr).
\]
Since the coefficient of \(x\) is \(2-q\le0\), this difference is minimized at
\(x=1-2^{-(q+1)}\), where it equals
\[
    \frac14+\frac{q}{2^{q+1}}>0.
\]

It remains to consider \(x\in[1-2^{-K},1]\).  In this interval all terms
\(2\le t\le K\) are active, so
\[
    f_0(x)
    =
    \frac12+x+\sum_{t=2}^{K}\bigl(x-(1-2^{-t})\bigr)
    =
    Kx-K+2-2^{-K}.
\]
Therefore
\[
    b_1(x)-f_0(x)
    =
    \frac14+2x-\bigl(Kx-K+2-2^{-K}\bigr).
\]
The coefficient of \(x\) is \(2-K\le0\), so the minimum on
\([1-2^{-K},1]\) is attained at \(x=1\).  Hence
\[
    b_1(x)-f_0(x)
    \ge
    b_1(1)-f_0(1)
    =
    \frac14+2^{-K}>0.
\]
Consequently, \(f_0(x)\le b_1(x)\le b_r(x)\le f_r(x)\) for every \(r\ge1\), and therefore
\[
    g_K(x)=f_0(x)
    \qquad
    \text{for every }x\in[1/4,1].
\]
\end{proof}

\begin{figure}[t]
\centering
\begin{tikzpicture}
\begin{axis}[
    width=.96\linewidth,
    height=6.4cm,
    xmin=0, xmax=1,
    ymin=0, ymax=2.15,
    axis lines=left,
    xlabel={$x$},
    ylabel={$f_r(x),\,g_4(x)$},
    domain=0:1,
    samples=400,
    xtick={0,1/16,1/8,1/4,1/2,3/4,1},
    xticklabels={$0$,$\frac1{16}$,$\frac18$,$\frac14$,$\frac12$,$\frac34$,$1$},
    ytick={0,1/4,1/2,1,3/2,2},
    yticklabels={$0$,$\frac14$,$\frac12$,$1$,$\frac32$,$2$},
    legend style={
        font=\scriptsize,
        at={(0.03,0.97)},
        anchor=north west,
        draw=none,
        fill=white
    },
    clip=true,
]

% individual branches
\addplot[thick, blue] {1/2 + x + max(x-3/4,0) + max(x-7/8,0) + max(x-15/16,0)};
\addlegendentry{$f_0$}

\addplot[thick, red] {1/4 + 2*x + max(x-3/8,0) + max(x-7/16,0)};
\addlegendentry{$f_1$}

\addplot[thick, teal] {1/8 + 3*x + max(x-3/16,0)};
\addlegendentry{$f_2$}

\addplot[thick, violet] {1/16 + 4*x};
\addlegendentry{$f_3$}

\addplot[thick, orange] {5*x};
\addlegendentry{$f_4$}

% lower envelope g_4
\addplot[very thick, black] {
    (x <= 1/16) ? (5*x) : (
    (x <= 1/8)  ? (1/8 + 3*x) : (
    (x <= 1/4)  ? (1/4 + 2*x) : 
                  (1/2 + x + max(x-3/4,0) + max(x-7/8,0) + max(x-15/16,0))
    ))
};
\addlegendentry{$g_4=\min_{0\le r\le 4} f_r$}

% transition markers
\addplot[densely dashed, gray] coordinates {(1/16,0) (1/16,2.15)};
\addplot[densely dashed, gray] coordinates {(1/8,0) (1/8,2.15)};
\addplot[densely dashed, gray] coordinates {(1/4,0) (1/4,2.15)};

\end{axis}
\end{tikzpicture}
\caption{The dyadic branches $f_0,\dots,f_4$ and the lower envelope $g_4$ for $K=4$.
The 
%%yc
function \(g_4\)
% $g_K(x)$ 
follows $f_4$ on $[0,1/16]$, $f_2$ on $[1/16,1/8]$, $f_1$ on $[1/8,1/4]$,
and $f_0$ on $[1/4,1]$.}
\label{fig:dyadic-envelope-k4}
\end{figure}

We now integrate $g_K(x)$. The previous two lemmas identify $g_K$ explicitly. We distinguish the corresponding intervals below.
\[
g_K(x)=
\begin{cases}
(K+1)x, & 0\le x\le 2^{-K},\\[1mm]
2^{-(r+1)}+(r+1)x,
& \text{for the unique } r\in\{1,\dots,K-2\}
  \text{ such that } 2^{-(r+2)}\le x\le 2^{-(r+1)},\\[1mm]
% 2^{-(r+1)}+(r+1)x,
% & 2^{-(r+2)}\le x\le 2^{-(r+1)},\quad 1\le r\le K-2,\\[1mm]
f_0(x), & \frac14\le x\le 1.
\end{cases}
\]
The key point 
%%yc add??
for passing from the discrete sum to the integral
is that $g_K$ is nondecreasing on $[0,1]$.

\begin{lemma}\label{lem:integral-lb}
Let $A_K:=\int_0^1 g_K(x)\,dx$. Then $\ALG \ge \frac{N^2}{m} A_K = mA_K n^2$.
\end{lemma}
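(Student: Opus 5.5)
The proof chains Lemma~\ref{lem:tx-identity}, Corollary~\ref{cor:TX-envelope}, and a Riemann-sum comparison that uses monotonicity of $g_K$. By Lemma~\ref{lem:tx-identity} and Corollary~\ref{cor:TX-envelope},
\[
\ALG=\sum_{X=1}^{N}T_X\ \ge\ \frac{N}{m}\sum_{X=1}^{N} g_K\!\left(\frac{X}{N}\right).
\]
So it suffices to show that $\sum_{X=1}^{N} g_K(X/N)\ge N\int_0^1 g_K(x)\,dx$. Once this holds, $\ALG\ge \frac{N^2}{m}A_K$, and substituting $N=nm$ gives $\frac{N^2}{m}=mn^2$, which is the stated identity.

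For the comparison, note that if $g_K$ is nondecreasing on $[0,1]$, then for each $X$ and every $x\in[(X-1)/N,X/N]$ we have $g_K(X/N)\ge g_K(x)$. Hence
\[
\frac1N\, g_K(X/N)\ \ge\ \int_{(X-1)/N}^{X/N} g_K(x)\,dx ,
\]
and summing over $X=1,\dots,N$ yields $\frac1N\sum_X g_K(X/N)\ge\int_0^1 g_K$. This is the right-endpoint Riemann sum bound, and it needs no error term. Integrability is immediate because $g_K$ is a minimum of finitely many continuous piecewise-affine functions, so it is continuous.

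The main step is monotonicity of $g_K$. I would first try the structural argument: each $f_r$ is nondecreasing, since $b_r$ has positive slope $r+1$ and each correction term $(x-c)^+$ is nondecreasing. A pointwise minimum of nondecreasing functions is nondecreasing, so $g_K$ is nondecreasing on $[0,1]$, and this argument does not depend on the explicit description of the envelope.

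As a consistency check, the explicit formula from Lemmas~\ref{lem:dyadic-left-envelope} and~\ref{lem:f0-dominates} also shows monotonicity. Each piece has positive slope, and adjacent pieces agree at the breakpoints $2^{-K}$, $2^{-(r+1)}$ and $1/4$, by the identities $b_{q+1}-b_q=x-2^{-(q+2)}$ and $b_K-b_{K-1}=x-2^{-K}$. A slight gap is the interval $[2^{-K},2^{-(K-1)}]$, which is not covered by the two lemmas as stated. The structural argument avoids it, so I would rely on the latter.
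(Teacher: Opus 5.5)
Your proof is correct and follows the paper's argument exactly: you combine Lemma~\ref{lem:tx-identity} with Corollary~\ref{cor:TX-envelope}, then use the right-endpoint comparison $\int_{(X-1)/N}^{X/N}g_K\le \frac1N g_K(X/N)$. Your observation that $g_K$ is nondecreasing because each $f_r$ is nondecreasing justifies the monotonicity more cleanly than the paper, which merely asserts it. One small correction to your side remark: $[2^{-K},2^{-(K-1)}]$ is not a gap, since for $K\ge 3$ it is exactly $I_{K-2}$ in part~(2) of Lemma~\ref{lem:dyadic-left-envelope}, and for $K=2$ it lies inside $[1/4,1]$, so the explicit formula does cover all of $[0,1]$.
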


\begin{proof}
By Corollary~\ref{cor:TX-envelope}, $\ALG = \sum_{X=1}^{N}T_X \ge \frac{N}{m}\sum_{X=1}^{N} g_K\!\left(\frac{X}{N}\right)$.
Since $g_K$ is nondecreasing, for every $X\in\{1,\dots,N\}$, we have $\int_{(X-1)/N}^{X/N} g_K(x)\,dx \le\frac{1}{N} g_K\!\left(\frac{X}{N}\right)$. Summing over all $X$ yields $\int_0^1 g_K(x)\,dx \le \frac1N \sum_{X=1}^{N} g_K\!\left(\frac{X}{N}\right)$. Multiplying this inequality by $N^2/m$ yields the stated bound.
\end{proof}

\begin{lemma}\label{lem:AK-value}
For every $K\ge 2$, $A_K = 1-\frac{1}{2\cdot 4^K}$.
\end{lemma}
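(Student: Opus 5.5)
The plan is a direct computation. Lemmas~\ref{lem:dyadic-left-envelope} and~\ref{lem:f0-dominates} already give $g_K$ explicitly on three kinds of intervals: $[0,2^{-K}]$, the dyadic intervals $I_r=[2^{-(r+2)},2^{-(r+1)}]$ for $1\le r\le K-2$, and $[1/4,1]$. These intervals tile $[0,1]$ up to endpoints. Indeed, $I_{K-2}=[2^{-K},2^{-(K-1)}]$ starts where the first interval ends, and $I_1=[1/8,1/4]$ ends where the last begins. So I will split $A_K=\int_0^1 g_K$ into three contributions: a left piece $P_K$, a middle piece $M_K$, and a right piece $R_K$. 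I will compute each in closed form and then check that the total telescopes to $1-\frac{1}{2\cdot 4^K}$.

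\textbf{Left and middle pieces.} On $[0,2^{-K}]$ we integrate $(K+1)x$, which gives
\[
P_K=\frac{K+1}{2}\,4^{-K}.
\]
On $I_r$ the integrand is affine, namely $2^{-(r+1)}+(r+1)x$, and $I_r$ has length $2^{-(r+2)}$. A routine computation gives
\[
\int_{I_r} g_K
=2^{-(r+1)}2^{-(r+2)}+\frac{r+1}{2}\bigl(4^{-(r+1)}-4^{-(r+2)}\bigr)
=4^{-(r+2)}\Bigl(2+\tfrac32(r+1)\Bigr).
\]
Substituting $k=r+2$, the middle piece becomes
\[
M_K=\sum_{k=3}^{K}4^{-k}\Bigl(\tfrac12+\tfrac32 k\Bigr).
\]

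\textbf{Right piece.} By Lemma~\ref{lem:f0-dominates}, on $[1/4,1]$ we have $f_0(x)=\frac12+x+\sum_{t=2}^K(x-(1-2^{-t}))^+$. The affine part contributes
\[
\int_{1/4}^1\Bigl(\tfrac12+x\Bigr)\,dx=\frac{27}{32}.
\]
Each correction term is a ramp starting at $1-2^{-t}\ge 3/4$, so it lies entirely inside $[1/4,1]$. It therefore contributes a triangle of area $\frac12 4^{-t}$. Hence
\[
R_K=\frac{27}{32}+\frac12\sum_{t=2}^K4^{-t}.
\]

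\textbf{Summation.} I will evaluate the finite geometric sums $\sum_{k=3}^K 4^{-k}$ and $\sum_{t=2}^K 4^{-t}$, and the arithmetico-geometric sum $\sum_{k=3}^K k4^{-k}$. For the last one I will use the closed form $\sum_{k\le K}k y^k$ with $y=1/4$, or equivalently telescope $k4^{-k}$ against $(k+1)4^{-(k+1)}$.

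As a sanity check, let $K\to\infty$. The three pieces become $0$, $\frac{11}{96}$ and $\frac{81}{96}+\frac{4}{96}$, which sum to exactly $1$. This matches the claimed limit.

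For finite $K$ it then remains to show that
\[
P_K-\sum_{k>K}4^{-k}\Bigl(\tfrac12+\tfrac32k\Bigr)-\frac12\sum_{t>K}4^{-t}=-\frac{1}{2\cdot4^K}.
\]
The tails are $\sum_{k>K}4^{-k}=\frac{4^{-K}}{3}$ and $\sum_{k>K}k4^{-k}=4^{-K}\bigl(\frac{K}{3}+\frac49\bigr)$. Plugging these in, the $K4^{-K}$ terms cancel against $P_K$, and the constant terms reduce to $-\frac12 4^{-K}$.

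\textbf{Main obstacle.} I expect the only delicate step to be this bookkeeping of the tails, since an off-by-one error in the index ranges $r\leftrightarrow k$ or in the $t$-range would break the cancellation. I will guard against it by checking the formula directly at $K=2$. In that case the middle piece is empty and $g_2$ equals $3x$ on $[0,1/4]$, which integrates to $3/32$. Adding $R_2=\frac{27}{32}+\frac1{32}$ gives $\frac{31}{32}=1-\frac1{2\cdot16}$, as claimed.
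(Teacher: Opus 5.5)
Your proposal is correct and follows the paper's proof. You use the same split into $[0,2^{-K}]$, the dyadic intervals $I_r$, and $[1/4,1]$ (the affine part plus the ramp triangles of area $\frac12 4^{-t}$), and your pieces reduce to the paper's $\frac{11}{96}-\frac{1}{3\cdot 4^K}$ and $\frac{85}{96}-\frac{1}{6\cdot 4^K}$; the only cosmetic difference is that you evaluate the finite sums as the $K\to\infty$ limit minus explicit tails rather than by direct summation, and those tail computations check out.
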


\begin{proof}
We split the integral into the left dyadic part and the right $f_0$ part.
\subparagraph*{Left part.}
By Lemma~\ref{lem:dyadic-left-envelope},
\[
\int_0^{1/4} g_K(x)\,dx
=
\int_0^{2^{-K}} (K+1)x\,dx
+
\sum_{r=1}^{K-2}\int_{2^{-(r+2)}}^{2^{-(r+1)}} \bigl(2^{-(r+1)}+(r+1)x\bigr)\,dx.
\]
The first integral is
\[
\int_0^{2^{-K}} (K+1)x\,dx
=
\frac{K+1}{2^{2K+1}}.
\]
For $1\le r\le K-2$,
\[
\int_{2^{-(r+2)}}^{2^{-(r+1)}} \bigl(2^{-(r+1)}+(r+1)x\bigr)\,dx
=
\frac{3r+7}{2^{2r+5}}.
\]
Therefore,
\[
\int_0^{1/4} g_K(x)\,dx
=
\frac{K+1}{2^{2K+1}}
+
\sum_{r=1}^{K-2}\frac{3r+7}{2^{2r+5}}.
\]
A direct summation gives
\[
\int_0^{1/4} g_K(x)\,dx
=
\frac{11}{96}-\frac{1}{3\cdot 4^K}.
\]

\subparagraph*{Right part.}
By Lemma~\ref{lem:f0-dominates},
\[
\int_{1/4}^{1} g_K(x)\,dx
=
\int_{1/4}^{1}\left(\frac12+x\right)\,dx
+
\sum_{t=2}^{K}\int_{1-2^{-t}}^{1}\bigl(x-(1-2^{-t})\bigr)\,dx.
\]
The first term equals
\[
\int_{1/4}^{1}\left(\frac12+x\right)\,dx=\frac{27}{32}.
\]
For each $t\ge 2$,
\[
\int_{1-2^{-t}}^{1}\bigl(x-(1-2^{-t})\bigr)\,dx
=
\frac{1}{2^{2t+1}}.
\]
Hence,
\[
\sum_{t=2}^{K}\frac{1}{2^{2t+1}}
=
\frac12\sum_{t=2}^{K}4^{-t}
=
\frac{1}{24}\left(1-4^{-(K-1)}\right)
=
\frac{1}{24}-\frac{1}{6\cdot 4^K}.
\]
Thus,
\[
\int_{1/4}^{1} g_K(x)\,dx
=
\frac{27}{32}+\frac{1}{24}-\frac{1}{6\cdot 4^K}
=
\frac{85}{96}-\frac{1}{6\cdot 4^K}.
\]

Adding the two parts, we obtain
\[
A_K
=
\left(\frac{11}{96}-\frac{1}{3\cdot 4^K}\right)
+
\left(\frac{85}{96}-\frac{1}{6\cdot 4^K}\right)
=
1-\frac{1}{2\cdot 4^K}.
\]
This completes the proof.
\end{proof}

%%yc add??
We record the resulting online lower bound before turning to the offline comparison. Combining Lemma~\ref{lem:integral-lb} and Lemma~\ref{lem:AK-value} gives
$\ALG \ge m\left(1-\frac{1}{2\cdot 4^K}\right)n^2$. We now upper-bound the offline optimum by exhibiting a recursive schedule.

\begin{lemma}\label{lem:BK-recurrence}
For each $K\ge 1$, there exists an offline schedule with an objective value at most $m\bigl(B_K n^2 + O(n)\bigr)$, where $B_1=\frac58, B_K=\frac12+\frac14 B_{K-1}$ for $K\ge 2$. Consequently, $B_K=\frac23-\frac{1}{6\cdot 4^K}$.
\end{lemma}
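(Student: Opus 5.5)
We build the schedule explicitly: on each machine, run the jobs in order of nondecreasing total size, with each job's processing operation immediately after its test. The recurrence then comes from peeling off the type-$0$ jobs and recognising the rest as a shifted $(K-1)$-type instance.

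\textbf{Step 1: reduce to one machine.} Distribute the $N=nm$ jobs evenly, so each machine receives $\lambda_i n$ jobs of type $i$. By the divisibility assumption these counts are integers. The machines are independent, so it suffices to show that one machine with $n$ jobs of these proportions costs at most $B_K n^2+O(n)$. On each machine, order jobs by total size $\sigma=i+1$, so $0\prec 1\prec\cdots\prec K$. This is a valid schedule because each test is immediately followed by its processing operation.

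\textbf{Step 2: peel off type $0$.} Half the jobs have type $0$ and size $1$; they occupy $[0,n/2]$ and contribute $\sum_{j\le n/2} j=\frac{n^2}{8}+O(n)$. The other $n/2$ jobs have types $1,\dots,K$ in proportions $\frac12,\frac14,\dots,2^{-(K-1)},2^{-(K-1)}$. Relabelling type $i$ as type $i-1$, this is exactly the $(K-1)$-type dyadic family on $n/2$ jobs, except that every job is one unit longer.

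Its cost therefore splits into three parts:
\begin{itemize}
\item the common offset $n/2$ for each of the $n/2$ jobs, giving $\frac{n^2}{4}$;
\item the $(K-1)$-type schedule on $n/2$ jobs, which by induction costs $B_{K-1}\frac{n^2}{4}+O(n)$;
\item the extra unit per job, which delays the $j$-th job by $j$ and contributes $\sum_{j\le n/2} j=\frac{n^2}{8}+O(n)$.
\end{itemize}
The relabelled schedule is still in nondecreasing-size order, so the inductive hypothesis applies to it. Summing gives $\bigl(\frac18+\frac14+\frac18\bigr)n^2+\frac14B_{K-1}n^2=\bigl(\frac12+\frac14B_{K-1}\bigr)n^2+O(n)$.

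For this peeling to be valid, the $(K-1)$-family should be defined for $K-1\ge1$ by the same proportion rule. For $K-1=1$ that rule gives proportions $\tfrac12,\tfrac12$, which matches.

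\textbf{Step 3: base case.} For $K=1$ there are $n/2$ jobs of size $1$ followed by $n/2$ jobs of size $2$. The cost is $\frac{n^2}{8}+\frac{n^2}{4}+2\cdot\frac{n^2}{8}=\frac58n^2+O(n)$, so $B_1=\frac58$.

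\textbf{Step 4: solve the recurrence.} The fixed point of $B\mapsto\frac12+\frac14B$ is $\frac23$. Hence $B_K-\frac23=4^{-(K-1)}\bigl(\frac58-\frac23\bigr)=-\frac{1}{24}4^{-(K-1)}=-\frac{1}{6\cdot4^K}$.

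The only delicate point is the bookkeeping in Step 2. The $O(n)$ errors must accumulate only over the fixed number $K$ of recursion levels, so that they remain $O(n)$. The relabelled instance must also match the inductive hypothesis exactly, including the doubled last proportion $\lambda_K=\lambda_{K-1}$.
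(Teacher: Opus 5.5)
Your proposal is correct and follows the paper's proof essentially step for step. It distributes the jobs evenly over the machines and uses nondecreasing total-size order on each machine; it then peels off the $n/2$ type-$0$ jobs (contributing $\frac18 n^2+\frac14 n^2$), recognises the remaining jobs with one unit removed as the $(K-1)$-type dyadic family on $n/2$ jobs, adds back $\frac18 n^2$, and solves the recurrence around the fixed point $\frac23$, with your explicit check that the reduced proportions end with the repeated value $2^{-(K-1)}$ (including the case $K-1=1$) being slightly more careful than the paper's own listing of those proportions.
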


\begin{proof}
Distribute the jobs evenly over the $m$ machines so that each machine receives exactly
\[
\lambda_0 n,\lambda_1 n,\dots,\lambda_K n
\]
jobs of types $0,1,\dots,K$, respectively.
%%yc 
On each machine, schedule the jobs
% and schedule them on each machine 
in nondecreasing order of total size,
$
0,1,2,\dots,K
$. It suffices to analyze one machine.

There are exactly $n/2$ type-$0$ jobs. Each has total size $1$, so their own cost is
\[
1+2+\cdots+\frac{n}{2}
=
\frac18 n^2 + O(n).
\]
These type-$0$ jobs also delay each of the remaining $n/2$ jobs by exactly $n/2$ time units, adding
\[
\frac{n}{2}\cdot \frac{n}{2}
=
\frac14 n^2.
\]

Now consider the remaining jobs of types $1,\dots,K$. If we subtract one unit from the total size of every remaining job, then their type proportions become exactly
\[
\frac12,\frac14,\dots,\frac{1}{2^{K-1}}.
\]
This is the $(K-1)$-layer dyadic family, scaled down from $n$ jobs to $n/2$ jobs on this machine. Hence, the residual cost is
\[  
 B_{K-1} (\frac n2)^2 + O(n) = \frac14 B_{K-1} n^2 + O(n).
\]
Finally, adding back the removed unit to each of the remaining $n/2$ jobs adds
\[
1+2+\cdots+\frac{n}{2}
=
\frac18 n^2 + O(n).
\]

Summing the four terms yields
\[
B_K
=
\frac18+\frac14+\frac18+\frac14 B_{K-1}
=
\frac12+\frac14 B_{K-1}.
\]

For $K=1$, one machine has $n/2$ jobs of size $1$ and $n/2$ jobs of size $2$. Thus,
\[
B_1
=
\frac18+\frac14+\frac14
=
\frac58.
\]

To solve the recurrence, subtract $2/3$ from both sides.
\[
B_K-\frac23
=
\frac14\left(B_{K-1}-\frac23\right)
\]
Since
\[
B_1-\frac23 = \frac58-\frac23 = -\frac{1}{24},
\]
it follows that
\[
B_K-\frac23
=
-\frac{1}{24\cdot 4^{K-1}}
=
-\frac{1}{6\cdot 4^K}.
\]
Hence,
\[
B_K=\frac23-\frac{1}{6\cdot 4^K}.
\]
\end{proof}

\begin{proof}[Proof of Theorem~\ref{thm:multi-type-dyadic-lb}]
Combining Lemma~\ref{lem:integral-lb} and Lemma~\ref{lem:AK-value} yields
$\ALG \ge m\left(1-\frac{1}{2\cdot 4^K}\right)n^2$. By Lemma~\ref{lem:BK-recurrence}, $\OPT \le m\left(\frac23-\frac{1}{6\cdot 4^K}\right)n^2 + O(n)$. Evaluating the asymptotic competitive ratio gives
$$
\liminf_{n\to\infty}\frac{\ALG}{\OPT} \ge \frac{1-\frac{1}{2\cdot 4^K}}{\frac23-\frac{1}{6\cdot 4^K}} = \frac{3(2-4^{-K})}{4-4^{-K}}.
$$
Taking the limit as $K\to\infty$ yields $\frac32$, which concludes the proof.
\end{proof}

\begin{corollary}
No deterministic online algorithm for obligatory-test scheduling on a single machine can achieve an asymptotic competitive ratio strictly less than $\frac32$.
\end{corollary}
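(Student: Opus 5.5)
The corollary will follow by specializing Theorem~\ref{thm:multi-type-dyadic-lb} to $m=1$. Nothing in the construction or analysis of the $K$-type dyadic family requires $m\ge 2$. With $m=1$ we have $N=n$, and the adversary still assigns processing times in test-completion order. Lemma~\ref{lem:tx-identity} holds for any schedule, in particular a single-machine one. Lemma~\ref{lem:general-r-branch} is a pure work-accounting statement: on one machine the total work performed by time $T_X$ is at most $T_X = mT_X$, so the bound holds verbatim. Consequently Corollary~\ref{cor:TX-envelope}, Lemmas~\ref{lem:dyadic-left-envelope}, \ref{lem:f0-dominates}, \ref{lem:integral-lb} and~\ref{lem:AK-value} give $\ALG\ge (1-\frac{1}{2\cdot 4^K})n^2$. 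On the offline side, Lemma~\ref{lem:BK-recurrence} with $m=1$ is exactly the single-machine analysis: the recursion is carried out on one machine holding all $n$ jobs. This gives $\OPT\le (\frac23-\frac{1}{6\cdot 4^K})n^2+O(n)$.

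The argument then proceeds by contradiction. Suppose some deterministic online algorithm satisfies $\ALG\le c\cdot\OPT$ (asymptotically) with $c<\frac32$. Since $\frac{3(2-4^{-K})}{4-4^{-K}}\to\frac32$, we can choose $K\ge2$ so that this ratio exceeds $c$. Running the $K$-type dyadic adversary against the algorithm with $n\to\infty$ (taking $n$ divisible by $2^K$) gives $\liminf \ALG/\OPT\ge \frac{3(2-4^{-K})}{4-4^{-K}}>c$, a contradiction. The adversary is adaptive, but this is legitimate: the algorithm is deterministic, so the adversary's responses define a fixed instance in hindsight, and on that instance $\OPT$ is computed offline.

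The only point needing care is to make the asymptotic notion precise. I interpret ``asymptotic competitive ratio $c$'' as $\ALG\le c\,\OPT+o(\OPT)$ along instance families with $\OPT\to\infty$. Here $\OPT=\Theta(n^2)$, so the $O(n)$ terms and rounding errors vanish in the limit. I do not expect a substantive obstacle. The main thing to confirm is that the $m=1$ case is covered by the proof of Theorem~\ref{thm:multi-type-dyadic-lb}, and the argument never uses $m>1$.
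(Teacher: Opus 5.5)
Your proposal is correct and follows the paper's route: the corollary is the $m=1$ case of Theorem~\ref{thm:multi-type-dyadic-lb}, since neither the work-accounting bound of Lemma~\ref{lem:general-r-branch} nor the balanced offline schedule of Lemma~\ref{lem:BK-recurrence} uses $m\ge 2$, and one then takes $K$ large enough that $\frac{3(2-4^{-K})}{4-4^{-K}}$ exceeds any fixed $c<\frac32$. Your remarks that an adaptive adversary is legitimate against a deterministic algorithm, and that $\OPT=\Theta(n^2)$ absorbs the $O(n)$ and rounding terms, supply exactly the details needed to make this specialization rigorous.
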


\section{Algorithm}
\label{sec:alg}

We now present the algorithmic upper bound. The algorithm is the natural parallel version of the single-machine 1-SORT rule from~\cite{DogeasErlebachLiang2024}. On one machine, 1-SORT always selects an available operation with minimum running time.  
We extend this rule directly to $m$ identical machines. Whenever machines become idle, the algorithm assigns them available operations with smallest running times.
For a test operation $\tau_j$, the running time is the known test time $t_j$. For an execution operation $\pi_j$, the running time is the revealed processing time $p_j$, which is known only after $\tau_j$ has completed. Ties are broken by a fixed deterministic rule. We call the resulting algorithm \emph{parallel 1-SORT}.
The goal of this section is to prove a lifting theorem from one machine to $m$ machines. If single-machine 1-SORT is $\rho$-competitive, then parallel 1-SORT is $\frac{2(m+\rho-1)}{m+1}$ -competitive.  
For an operation $o$, we write $S(o)$ and $C(o)$ for its starting time and completion time, respectively.

\subparagraph{Event Times and Event Batches.}
We describe the algorithm in terms of event times. An event time is either time $0$ or a time at which at least one operation completes. Since operations are nonpreemptive and new execution operations are released only when tests complete, it suffices to specify the algorithm's behavior at event times. At an event time, the algorithm first marks all completed operations and releases the execution operation of every job whose test has just completed. It then assigns idle machines to available unscheduled operations. If fewer operations are available than idle machines, the remaining machines stay idle until the next event time.
The set of operations started at the same event time is called an event batch. Two schedules are batch-equivalent if they start the same event batch at every event time. In particular, batch-equivalent schedules have the same starting times, completion times, and objective value.

\subparagraph{Parallel 1-SORT.}
For an available operation $o$, define its running time 
%%yc
by
% as:
\[
    \ell(o):=
    \begin{cases}
        t_j, & \text{if } o=\tau_j,\\
        p_j, & \text{if } o=\pi_j.
    \end{cases}
\]
The value $\ell(\pi_j)=p_j$ is used only after $\pi_j$ has become available,
that is, after $\tau_j$ has completed.

\begin{center}
\fbox{
\begin{minipage}{0.92\linewidth}
\textbf{Parallel 1-SORT}

\smallskip
\textbf{Input:} A set of jobs on $m$ identical machines.

\smallskip
\textbf{Initialization:}
All tests $\tau_j$ are available at time $0$. No execution operation is
available initially.

\smallskip
\textbf{At each event time $t$:}
\begin{enumerate}
    \item Mark all operations completed at time $t$.
    \item For every job $j$ whose test $\tau_j$ has just completed, reveal
    $p_j$ and make $\pi_j$ available.
    \item Let $r$ be the number of idle machines.
    \item Among all available unscheduled operations, choose up to $r$ operations
    with smallest running times $\ell(o)$, breaking ties deterministically.
    \item Start these operations on the idle machines.
\end{enumerate}
\end{minipage}
}
\end{center}

\begin{theorem}
\label{thm:parallel-lifted-ratio}
Suppose that single-machine 1-SORT is $\rho$-competitive and $\rho \le 2$. Then parallel 1-SORT is
$
    \frac{2(m+\rho-1)}{m+1}
$
-competitive on $m$ identical machines.
\end{theorem}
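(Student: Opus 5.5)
The proof follows the two-step roadmap of Section~\ref{sec:overview}. Let $L=(o_1,\dots,o_{2N})$ be the operation order produced by single-machine 1-SORT on the same instance. Let $\mathcal{L}_m$ denote the $m$-machine list schedule induced by $L$: at every event time it scans $L$ from left to right and starts the first available unscheduled operations on the idle machines.

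\textbf{Step 1: batch equivalence.} We first show that parallel 1-SORT and $\mathcal{L}_m$ start the same event batch at every event time. Suppose not, and let $t$ be the first event time at which the batches differ. Pick $h\in B_P(t)\setminus B_L(t)$, and let $r$ be the number of idle machines at $t$. Then at least $r$ available operations precede $h$ in $L$, so some available $o$ satisfies $o\prec_L h$ but $h\prec_P o$ under the 1-SORT key (length, then tie-break).

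\begin{itemize}
\item $h$ cannot be a test. Otherwise it was available when single-machine 1-SORT picked $o$, and that rule would have preferred $h$.
\item So $h=\pi_j$, and we must have $o\prec_L\tau_j$. Otherwise $\pi_j$ was already available when $o$ was chosen, giving the same contradiction.
\end{itemize}

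The remaining case is the one I expect to be the main obstacle. I would set up the invariant that in $\mathcal{L}_m$ no operation preceding $\tau_j$ in $L$ remains available and unstarted once $\tau_j$ has been started (``prefix-clean''). I would also prove that test completion times are monotone along $L$. Together these force $h$, and every skipped available operation before $h$, to be an execution released exactly at time $t$. Each such execution corresponds to a distinct test finishing at $t$, and each of those tests frees a machine. This yields more than $r$ idle machines, a contradiction. I would prove both invariants jointly by induction over event times, using that the two schedules agree before $t$.

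\textbf{Step 2: per-job bound.} Fix job $j$ and define $U_j$ and $V_j$ as in Section~\ref{sec:overview}, so that $C_j^{(1)}=U_j+V_j+\sigma_j$.

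\begin{itemize}
\item In $\mathcal{L}_m$, $\tau_j$ starts no later than $U_j/m$. Before $\tau_j$ starts, all machines are busy with operations preceding $\tau_j$ in $L$, since $\tau_j$ is available from time $0$.
\item For $\pi_j$, a standard list-scheduling argument applies. Between $C(\tau_j)$ and $S(\pi_j)$, every machine is busy with operations preceding $\pi_j$ in $L$. These operations have total volume at most $U_j+V_j$, and the operations before $\tau_j$ can only overlap the interval partially.
\end{itemize}

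I would charge each machine-time of busy work up to $S(\pi_j)$, excluding $\tau_j$'s own interval, to this prefix. This gives
\[
C_j^{(m)}\le\frac{U_j+V_j}{m}+\sigma_j=\frac1m C_j^{(1)}+\Bigl(1-\frac1m\Bigr)\sigma_j.
\]
Summing over $j$ yields $\ALG_m\le \frac1m\ALG_1+(1-\frac1m)B$, where $B=\sum_j\sigma_j$.

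\textbf{Step 3: comparison with the optimum.} Index jobs by nondecreasing $\sigma_j$ and set $A=\sum_j j\sigma_j$. Then $\OPT_1=A$, because a single-machine optimum may run each test immediately followed by its processing, in SPT order of $\sigma$. By hypothesis, $\ALG_1\le\rho A$.

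For $m$ machines, I would use the standard bound $\OPT_m\ge\frac1m A+\frac12(1-\frac1m)B$, which follows from the relaxation of SPT on $m$ machines applied to the sizes $\sigma_j$. This gives
\[
\frac{\ALG_m}{\OPT_m}\le R(x):=\frac{\rho x+m-1}{x+\frac{m-1}{2}},\qquad x=\frac AB\ge1.
\]
The derivative of $R$ has the sign of $\rho\frac{m-1}{2}-(m-1)\le0$ when $\rho\le2$, so $R(x)\le R(1)=\frac{2(m+\rho-1)}{m+1}$.
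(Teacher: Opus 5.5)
Your outline follows the paper's proof step for step. The gap is in the step you flag as the main obstacle: you assert it from ingredients that do not suffice.

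Prefix-clean and test-completion monotonicity do handle the part of $E$ lying before $\tau_j$. Applied to $o\prec_L\tau_j$, they force $o=\pi_a$ to be newly released with $\tau_a\prec_L\tau_j$. Hence $t=C(\tau_a)\le C(\tau_j)\le t$, so $h$ is newly released, and the same facts cover every $q\in E$ with $q\prec_L\tau_j$. But neither property says anything about $q\in E$ with $\tau_j\prec_L q\prec_L\pi_j$. Prefix-clean at $\tau_j$ concerns only operations before $\tau_j$, and monotonicity only compares tests. So nothing yet excludes $q$ being a still-unstarted test, or an execution $\pi_b$ released before $t$ with $\tau_b\prec_L\tau_j$.

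The missing idea is to use the greedy order of $L$ once more. Since $\tau_j$ was available when single-machine 1-SORT chose $o$, you have $o\prec_P\tau_j$, hence $\pi_j\prec_P o\prec_P\tau_j$. Then $q=\tau_b$ would give the cycle $\pi_j\prec_P\tau_j\prec_P\tau_b\prec_P\pi_j$. Likewise, $q=\pi_b$ with $\tau_b\prec_L\tau_j$ would give $\pi_j\prec_P\tau_j\prec_P\pi_b\prec_P\pi_j$. So $q=\pi_b$ with $\tau_j\prec_L\tau_b$, and only then does monotonicity yield $C(\tau_b)\ge C(\tau_j)=t$. Without this argument, the count $r\ge|E|+1$ is unsupported.

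There are two smaller points.

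\textbf{Step~2.} The claim that every machine is busy with operations preceding $\pi_j$ on $[C(\tau_j),S(\pi_j))$ does not follow from the list rule. An operation $z$ with $\pi_j\prec_L z$, started while $\tau_j$ was running (the only time it could start before $\pi_j$), might still occupy a machine. The claim does hold here, by a prefix-clean type induction. After the batch that starts $z$, nothing preceding $z$ in $L$ remains available and unstarted. At each later event time up to $C(\tau_j)$, the available unstarted operations preceding $\pi_j$ are therefore newly released executions, each with its own freed machine. So $\pi_j$ starts at $C(\tau_j)$ and the interval is empty. (The paper's appeal to a ``standard prefix-volume bound'' passes over the same point.) Granting this, your charging argument, which excludes only $\tau_j$'s own interval, is correct. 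It is also slightly more economical than the paper's $\max$ bound, since it never needs the other machines to be busy while $\tau_j$ runs.

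\textbf{Step~3.} With $\sigma_j$ nondecreasing, $\sum_j j\sigma_j$ is the cost of the largest-first order, not $\OPT_1$. The bound $\OPT_m\ge\frac1m A+\frac12(1-\frac1m)B$ fails for that $A$, already for $m=1$. Index the jobs so that $\sigma_1\ge\cdots\ge\sigma_N$, as the paper does.
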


%%yc

\subparagraph{Proof Setup.}
To prove Theorem~\ref{thm:parallel-lifted-ratio}, we compare parallel 1-SORT with the operation sequence obtained by applying single-machine 1-SORT to the same instance.

% \begin{proof}
% To analyze the algorithm, we compare it with the operation sequence obtained by
% applying single-machine 1-SORT to the same instance. 
Let
\[
    L=(o_1,o_2,\ldots,o_{2N})
\]
be this sequence. For two operations $a$ and $b$, we write $a\prec_L b$ if
$a$ appears before $b$ in $L$. Since the single-machine schedule is feasible, $\tau_j\prec_L\pi_j$ for every job $j$.
For two operations $a$ and $b$, we write $a\prec_P b$ if $a$ has a
smaller running time than $b$,
%%yc
with ties
% ties are
resolved by the fixed deterministic rule.
%%yc
We
% Finally, 
define the $m$-machine list schedule induced by $L$
%%yc add
as follows.
At each event time, after updating the available set, this schedule scans $L$ from left to
right and starts the first available unscheduled operations on the idle machines.
The first step of the analysis is to prove that this induced list schedule starts
exactly the same event batch as parallel 1-SORT at every event time.

%%yc
\subparagraph{Equivalence with the Induced List Schedule.}
% \subparagraph{Batch Equivalence.}
We first prove that parallel 1-SORT and the $m$-machine list schedule induced by
$L$ start the same event batch at every event time.

\begin{lemma}[Prefix-clean invariant]
\label{lem:prefix-clean}
Fix a test operation $\tau_x$. In the $m$-machine list schedule induced by $L$,
after $\tau_x$ has been scheduled, no operation $q\prec_L \tau_x$ can remain
available and unscheduled at the end of any event batch.
\end{lemma}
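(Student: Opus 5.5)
The plan is to argue by induction over the event times of the induced list schedule, starting with the event batch in which $\tau_x$ itself is started. The claim is that at the end of every event batch from that point on, the set of available unscheduled operations $q$ with $q\prec_L\tau_x$ is empty.

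\emph{Base case: the batch that starts $\tau_x$.} Let $t_0$ be the event time at which $\tau_x$ is started. The induced list schedule scans $L$ from left to right and starts the first available unscheduled operations on the $r$ idle machines. Since it reaches and starts $\tau_x$, every available unscheduled operation $q\prec_L\tau_x$ was encountered earlier in the scan. Each such $q$ must therefore have been started: if one were skipped, all $r$ machines would already have been filled before the scan reached $\tau_x$. Hence no such $q$ remains at the end of the batch at $t_0$.

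\emph{Inductive step: a later event time $t'>t_0$.} Assume that at the end of the previous batch no operation before $\tau_x$ in $L$ was available and unscheduled. I would classify the operations $q\prec_L\tau_x$ that could be available and unscheduled after the update at $t'$.
\begin{itemize}
\item \emph{Tests.} Every test is available from time $0$. So any test $q\prec_L\tau_x$ was available at $t_0$ and was therefore already scheduled by the base case. No test can appear here.
\item \emph{Executions.} The only remaining candidates are executions $\pi_j\prec_L\tau_x$ that become available exactly at $t'$, meaning that $\tau_j$ completes at $t'$. The inductive hypothesis rules out any older backlog.
\end{itemize}
This leaves only newly released executions, which the following counting step handles.

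\emph{Counting step.} This is the step I expect to be the crux. Each such $\pi_j$ corresponds to a distinct test $\tau_j$ that completes at time $t'$. Distinct operations running up to $t'$ occupy distinct machines, and scheduling is nonpreemptive. So each such test frees its own machine at $t'$, and the number $r'$ of idle machines at $t'$ is at least the number $k$ of these newly available executions. When the induced schedule scans $L$ from the left at $t'$, the first available unscheduled operations it meets are exactly these $k$ executions, since all of them precede $\tau_x$ and nothing else before $\tau_x$ is available. Because $k\le r'$, all of them are started in this batch. This completes the induction and proves Lemma~\ref{lem:prefix-clean}.

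The main subtlety to verify carefully is the machine-freeing count. Two points need checking: that a machine finishing a test at $t'$ is indeed counted as idle at $t'$ under the event-time semantics, and that ties (several tests ending simultaneously) are harmless. Both should follow directly from the definition of event times and event batches given above.
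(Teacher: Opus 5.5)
Your proof is correct and follows essentially the same route as the paper. The batch that starts $\tau_x$ clears every earlier available operation by the left-to-right scan, and in each later batch the only candidates before $\tau_x$ are newly released executions, each accompanied by the machine freed when its own test completes at that time; your explicit count $k\le r'$ just spells out the step the paper states in one line.
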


\begin{proof}
Consider the event batch in which $\tau_x$ is scheduled. If an available
unscheduled operation $q\prec_L\tau_x$ were left unscheduled in this batch, then
the list scheduler would encounter $q$ before $\tau_x$ and would schedule $q$
before $\tau_x$, a contradiction.

Now consider any later event batch, and assume the invariant holds before this batch starts. Any operation $q\prec_L\tau_x$ that becomes newly available in
this batch cannot be a test, since all tests are available from time $0$.
Thus $q$ is an execution operation whose test completes at this event time.
Each such newly available execution is accompanied by the completion of its own test, which frees one machine. When the list scheduler scans $L$, these newly
available operations before $\tau_x$ are encountered before any operation after
$\tau_x$. Hence they are scheduled in the same event batch, and none remains
available and unscheduled afterward.
\end{proof}

\begin{lemma}[Test monotonicity]
\label{lem:test-monotonicity}
If $\tau_a\prec_L\tau_b$, then in the $m$-machine list schedule induced by $L$,
\[
    S(\tau_a)\le S(\tau_b).
\]
\end{lemma}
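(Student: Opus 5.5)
Suppose, for contradiction, that $\tau_a\prec_L\tau_b$ but $S(\tau_b)<S(\tau_a)$ in the $m$-machine list schedule induced by $L$. Let $t:=S(\tau_b)$ be the event time at which $\tau_b$ is placed, that is, the batch in which $\tau_b$ is scheduled. The plan is a direct argument from the scanning rule of the induced list schedule, using only that tests are available from time~$0$.

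\textbf{Step 1 (availability).} Every test is available from time $0$ and remains available until it is scheduled. Since $S(\tau_a)>t$, the test $\tau_a$ is still unscheduled when the event batch at time $t$ is formed. Hence $\tau_a$ is available and unscheduled at the moment the list scheduler scans $L$ at time $t$.

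\textbf{Step 2 (scan order).} At time $t$ the scheduler scans $L$ from left to right and starts the first available unscheduled operations until the idle machines are exhausted. It reaches and schedules $\tau_b$, so at least one idle machine was still free when the scan arrived at $\tau_b$. Because $\tau_a\prec_L\tau_b$ and $\tau_a$ is available by Step~1, the scan encounters $\tau_a$ strictly before $\tau_b$. At that point at least one machine is still idle, since one remains idle later for $\tau_b$. So the scheduler must start $\tau_a$ in this same batch, giving $S(\tau_a)=t$. This contradicts $S(\tau_a)>t$.

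\textbf{Remaining details.} Equivalently, one can state the result as: the set of tests started by any event time is a prefix of the tests in $L$-order. The only subtle point, and the one place to be careful, is the tie case $S(\tau_a)=S(\tau_b)$, which is allowed by the statement. It is also important to note that the scanning procedure within one event batch is sequential, so ``a machine is free when $\tau_b$ is reached'' implies ``a machine was free when $\tau_a$ was reached.'' I expect no real obstacle. Lemma~\ref{lem:prefix-clean} is not even needed, because tests never become newly available after time $0$. The argument is a one-step contradiction rather than an induction over batches.
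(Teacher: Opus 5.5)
Your proof is correct and matches the paper's argument: if $\tau_b$ started strictly earlier, then $\tau_a$ would still be available (all tests are available from time $0$) and unscheduled when $\tau_b$'s batch is formed, so the left-to-right scan would start $\tau_a$ first. Your remark that a machine still idle when the scan reaches $\tau_b$ was also idle when it reached $\tau_a$ spells out what the paper leaves implicit.
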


\begin{proof}
Both tests are available from time $0$. If $\tau_b$ were scheduled before
$\tau_a$, then when $\tau_b$ is selected, the operation $\tau_a$ would still be available and unscheduled. Since $\tau_a\prec_L\tau_b$, the list scheduler would select $\tau_a$ before $\tau_b$, a contradiction.
\end{proof}

Since all tests are available from time $0$ and $\prec_P$ compares running times,
the single-machine sequence $L$ orders tests by nondecreasing test time.  Hence,
if $\tau_a\prec_L\tau_b$, then $t_a\le t_b$. Together with
Lemma~\ref{lem:test-monotonicity}, this implies
$
    C(\tau_a)\le C(\tau_b)
    % \qquad
    \text{ whenever } \tau_a\prec_L\tau_b.
$

%%yc
% \textcolor{blue}{
\begin{lemma}[Batch equivalence]
% \begin{theorem}[Batch equivalence]
\label{lem:batch-equivalence}
The parallel 1-SORT schedule and the $m$-machine list schedule induced by $L$
are batch-equivalent.
\end{lemma}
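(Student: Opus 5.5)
We argue by induction over event times, following the roadmap from the overview. Suppose the two schedules agree on all event batches before some event time $t$, and let $t$ be the first event time at which they differ. Up to time $t$ the two schedules are then identical. In particular, at time $t$ they have the same set of completed operations, the same available unscheduled set $\mathcal{A}$, and the same number $r$ of idle machines. Both rules start $\min\{r,|\mathcal{A}|\}$ operations, so $|B_P(t)|=|B_L(t)|$. Since the batches differ, we may pick $h\in B_P(t)\setminus B_L(t)$. Because the list schedule skipped $h$, it filled all $r$ idle machines with operations of $\mathcal{A}$ preceding $h$ in $L$. Because parallel 1-SORT took $h$, at most $r-1$ members of $\mathcal{A}$ precede $h$ in $\prec_P$. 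Hence some $o\in B_L(t)$ satisfies $o\prec_L h$ and $h\prec_P o$. The rest of the proof derives a contradiction from such a pair.

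\textbf{Step 1: $h$ is not a test.} Recall that single-machine 1-SORT selects, at each step, the $\prec_P$-minimal available operation. If $h=\tau_j$, then $h$ is available from time $0$. So when the single-machine run selected $o$, the operation $h$ was still available, since $o\prec_L h$. The run would then have preferred $h$ because $h\prec_P o$, a contradiction. Hence $h=\pi_j$ for some job $j$.

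\textbf{Step 2: $o\prec_L\tau_j$.} If instead $\tau_j\prec_L o$, then in the single-machine run $\pi_j$ was already released when $o$ was chosen, and the same argument contradicts $h\prec_P o$. So $o\prec_L\tau_j\prec_L\pi_j$.

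\textbf{Step 3: the multi-machine contradiction.} This is the main obstacle. The operation $h=\pi_j$ is available at $t$, so $\tau_j$ has completed by $t$. In particular, $\tau_j$ was scheduled at some event time not later than $t$, strictly earlier than $t$ if its test length is positive. By Lemma~\ref{lem:prefix-clean}, once $\tau_j$ has been scheduled, no operation preceding $\tau_j$ in $L$ remains as leftover backlog at the end of a batch. Now consider the operations $q\in\mathcal{A}$ with $q\prec_L\tau_j$, which include $o$. Each such $q$ cannot be old backlog from before $t$. So it must be an execution operation $\pi_k$ newly released at $t$, meaning $\tau_k$ completes exactly at $t$ and $\tau_k\prec_L\pi_k\prec_L\tau_j$.

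We then use test monotonicity, via Lemma~\ref{lem:test-monotonicity} together with $C(\tau_a)\le C(\tau_b)$ for $\tau_a\prec_L\tau_b$. This gives $C(\tau_k)\le C(\tau_j)\le t$ with $C(\tau_k)=t$, hence $C(\tau_j)=t$. So $h$ is itself newly released at $t$. Next consider the operations of $\mathcal{A}$ lying between $\tau_j$ and $h$ in $L$. I would argue that they too are either newly released executions whose tests end at $t$, or could not coexist as old backlog. For the latter, I would reuse the prefix-clean invariant applied to the latest test preceding them, and check the delicate case where that latest test is $\tau_j$ itself.

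Granting this, every operation of $\mathcal{A}$ up to and including $h$ in $L$ is a distinct execution whose test completes at $t$. Each such test completion frees its own machine, so $r$ is at least the number of these operations, which is at least $r+1$ ($r$ operations before $h$, plus $h$). This contradicts the definition of $r$, since $r$ counts the idle machines and so is at least the number of tests completing at $t$. Therefore no first differing event time exists, and the schedules are batch-equivalent.

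If the step handling the operations between $\tau_j$ and $h$ resists, I would fall back on strengthening the induction hypothesis. Specifically, I would carry along the invariant that at the end of each batch, the set of unscheduled available operations is an $L$-suffix of $\mathcal{A}$ containing no execution whose test precedes the last scheduled test. That would make the counting immediate.
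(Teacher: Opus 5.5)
\textbf{There is a genuine gap at the step you grant.} Your setup, Steps 1--2, and the first half of Step 3 match the paper's proof. That first half shows every $q\in\mathcal{A}$ with $q\prec_L\tau_j$ is a newly released execution, and test monotonicity then forces $C(\tau_j)=t$. The unproved part is the members of $B_L(t)$ lying strictly between $\tau_j$ and $\pi_j$ in $L$.

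Your proposed repair points the wrong way. Lemma~\ref{lem:prefix-clean} for a scheduled test $\tau_x$ only controls operations that \emph{precede} $\tau_x$ in $L$. Applying it to the latest test preceding $q$ therefore says nothing about $q$.

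More fundamentally, prefix-clean and test monotonicity cannot exclude an old-backlog test $\tau_b$ with $\tau_j\prec_L\tau_b\prec_L\pi_j$. Such a test has been available since time $0$. If it is unscheduled at $t$, then by monotonicity so is every later test, so prefix-clean gives no information about it. Yet it frees no machine at $t$, so it breaks your count.

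Your fallback invariant is false as stated. Take $m=2$ and three unit tests with $p_a=p_c=100$, $p_d=0$, ties broken in the order $a,c,d$. Then $L=(\tau_a,\tau_c,\tau_d,\pi_d,\pi_a,\pi_c)$. At time $1$ the list schedule starts $\tau_d$ and $\pi_a$ and leaves $\pi_c$ pending, even though $\tau_c\prec_L\tau_d$ and $\tau_d$ is the last scheduled test.

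\textbf{The missing idea is to use the priority order once more.} Since $\tau_j$ is available from time $0$ and the single-machine run chose $o$ before it, $o\prec_P\tau_j$. Combined with $h=\pi_j\prec_P o$, this gives $\pi_j\prec_P\tau_j$.

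The paper uses this in two ways:
\begin{itemize}
\item A test $\tau_b$ between $\tau_j$ and $\pi_j$ is excluded by the cycle $\pi_j\prec_P\tau_j\prec_P\tau_b\prec_P\pi_j$.
\item For an execution $\pi_b$ in that range, the paper shows $\tau_j\prec_L\tau_b$, since otherwise $\pi_j\prec_P\tau_j\prec_P\pi_b\prec_P\pi_j$. Test monotonicity then gives $C(\tau_b)=t$.
\end{itemize}

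More directly, $\pi_j\prec_P\tau_j$ forces $\pi_j$ to be the immediate successor of $\tau_j$ in $L$. Every other operation available when $\tau_j$ finishes in the single-machine run was already available when $\tau_j$ was chosen, hence is $\succ_P\tau_j\succ_P\pi_j$.

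So the delicate set is empty. Your own prefix-clean argument then covers all of $B_L(t)$, and the count $r\ge r+1$ goes through. With this one observation added, your proof closes.
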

% }
% \end{theorem}

\begin{proof}
Assume for contradiction that the two schedules first choose different event
batches at some event time $t$. 
%%yc add
Before time $t$, they have chosen the same event batches. Hence,
immediately before the choices at time $t$,
%%yc delete
% the two schedules have the same history.  Hence 
they have the same available set
$A(t)$ and the same number $r$ of idle machines. Let $B_P(t)$ be the event batch chosen by parallel 1-SORT, and let $B_L(t)$ be
the event batch chosen by the list schedule. Choose $h\in B_P(t)\setminus B_L(t)$. Since $h$ is skipped by the list schedule, at least $r$ available operations
appear before $h$ in $L$.  
Define $E:=\{q\in A(t): q\prec_L h\}$. Then $|E|\ge r$.  
Since parallel 1-SORT chooses $h$,
%%yc ??
not all operations in $E$ can have priority higher than $h$ under \(\prec_P\). Hence
% while the list schedule chooses only operations before $h$, 
there exists an operation $o\in E$ such that
\[
    o\prec_L h
    \qquad\text{and}\qquad
    h\prec_P o.
\]

We first show that $h$ is an execution operation. If $h$ were a test, then $h$ would be available from time $0$. When the single-machine 1-SORT sequence chose $o$, the operation $h$ would also be available. 
Since $h\prec_P o$, 1-SORT
would choose $h$ before $o$, contradicting $o\prec_L h$.  
Hence $h=\pi_j$ for some job $j$.
Next we show that $o\prec_L\tau_j$.  
Suppose instead that $\tau_j\prec_L o$.  
Since $o\prec_L\pi_j$, when the single-machine sequence chose
$o$, the operation $\pi_j$ was already available and unscheduled. Since
$\pi_j=h\prec_P o$, 1-SORT would choose $\pi_j$ before $o$, contradicting
$o\prec_L\pi_j$. Therefore
\[
    o\prec_L\tau_j\prec_L\pi_j=h.
\]
Moreover, since $\tau_j$ is a test and is available from time $0$, the fact that
the single-machine sequence chooses $o$ before $\tau_j$ implies $o\prec_P\tau_j$.
Together with $\pi_j=h\prec_P o$, we obtain
$\pi_j\prec_P o\prec_P\tau_j$.
In particular,
$\pi_j\prec_P\tau_j$.

We now prove that $h=\pi_j$ becomes newly available at time $t$.  Since
$h\in A(t)$, the test $\tau_j$ has completed by time $t$.  Since
$o\prec_L\tau_j$ and $o\in A(t)$, Lemma~\ref{lem:prefix-clean} implies that
$o$ cannot be old available backlog after $\tau_j$ has been scheduled.  Hence
$o$ becomes newly available at time $t$.  Since tests are available from time
$0$, $o$ is an execution operation. Write $o=\pi_a$.  Then
\[
    \tau_a\prec_L\pi_a=o\prec_L\tau_j.
\]
Thus $\tau_a\prec_L\tau_j$, and therefore $C(\tau_a)\le C(\tau_j)$.
Because $o=\pi_a$ becomes newly available at time $t$, we have
$C(\tau_a)=t$.  Since $\pi_j$ is available at time $t$, we also have
$C(\tau_j)\le t$.  Hence $C(\tau_j)=t$,
so $h=\pi_j$ is newly available at time $t$. 

It remains to show that every operation in $E$ is newly available at time $t$. Let $q\in E$. 
% Consider the following two cases:
% \textbf{case 1}: 
If $q\prec_L\tau_j$, then Lemma~\ref{lem:prefix-clean} again implies that $q$
cannot be old available backlog after $\tau_j$ has been scheduled.  Since
$q\in A(t)$, it must be newly available at time $t$.
It remains to consider the case $\tau_j\prec_L q\prec_L\pi_j$.
We first show that $q$ cannot be a test.  Suppose $q=\tau_b$.  Since
$\tau_j\prec_L\tau_b$ and both tests are available from time $0$, we have $\tau_j\prec_P\tau_b$.
Also, when the single-machine sequence chose $\tau_b$, the execution
$\pi_j$ was already available and unscheduled.  Since $\tau_b\prec_L\pi_j$,
we have $\tau_b\prec_P\pi_j$.
Together with $\pi_j\prec_P\tau_j$, this gives the cycle
$\pi_j\prec_P\tau_j\prec_P\tau_b\prec_P\pi_j$,
%%yc add
which is
impossible.  Hence $q$ is an execution operation. 
Write $q=\pi_b$. We claim that $\tau_j\prec_L\tau_b$.  Suppose instead that
$\tau_b\prec_L\tau_j$. When the single-machine sequence chose $\tau_j$, the
operation $\pi_b$ was already available and unscheduled.  Since
$\tau_j\prec_L\pi_b$, we have $\tau_j\prec_P\pi_b$.
When the single-machine sequence chose $\pi_b$, the operation $\pi_j$ was already
available and unscheduled.  Since $\pi_b\prec_L\pi_j$, we have $\pi_b\prec_P\pi_j$.
Together with $\pi_j\prec_P\tau_j$, this gives the cycle $\pi_j\prec_P\tau_j\prec_P\pi_b\prec_P\pi_j$, again impossible. Therefore $\tau_j\prec_L\tau_b\prec_L\pi_b=q\prec_L\pi_j$. By the test-completion monotonicity proved above, $C(\tau_j)\le C(\tau_b)$.
We already know that $C(\tau_j)=t$.  Since $q=\pi_b$ is available at time $t$,
we also have $C(\tau_b)\le t$.  Therefore $C(\tau_b)=t$, so $q=\pi_b$ is newly
available at time $t$.
Thus every operation in $E$ is newly available at time $t$, and $h$ is newly
available at time $t$ as well.  These are distinct execution operations, so they
come from distinct tests completing at time $t$.  Each such test completion frees
one machine.  Hence at least $|E|+1$ machines are idle at time $t$, so $r\ge |E|+1$.
But $|E|\ge r$, giving $r\ge |E|+1>|E|\ge r$,
a contradiction.  Therefore no first divergence exists, and the two schedules
are batch-equivalent.
\end{proof}

\subparagraph{Completion-time Bound.}
We now analyze the $m$-machine list schedule induced by $L$.  By Lemma~\ref{lem:batch-equivalence}, the same bounds apply to parallel 1-SORT. 
% The argument below uses prefix-volume bounds for list schedules; it does not assume that all machines are continuously busy before a given start time.

Fix a job $j$.  Define
\[
    U_j:=\sum_{q\prec_L\tau_j}\ell(q),
    \qquad
    V_j:=\sum_{\tau_j\prec_L q\prec_L\pi_j}\ell(q).
\]
Thus $U_j$ is the total running time before $\tau_j$ in the single-machine
sequence, and $V_j$ is the total running time strictly between $\tau_j$ and
$\pi_j$.  Therefore the completion time of job $j$ in the single-machine
1-SORT schedule is
$
    C_j^{(1)}
    =
    U_j+t_j+V_j+p_j
    =
    U_j+V_j+\sigma_j.
$

\begin{lemma}
\label{lem:parallel-completion-bound}
In the $m$-machine list schedule induced by $L$,
\[
    C_j^{(m)}
    \le
    \frac{U_j+V_j}{m}+\sigma_j.
\]
\end{lemma}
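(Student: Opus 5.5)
The proof has two stages, both based on prefix-volume arguments for the list schedule induced by $L$.

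\emph{Stage 1: the test.} We first bound the start time of $\tau_j$ by $S(\tau_j)\le U_j/m$. Since $\tau_j$ is available from time $0$, the list scheduler never leaves a machine idle while $\tau_j$ is unscheduled. At every event time before $S(\tau_j)$, all machines that could have been assigned were assigned to operations preceding $\tau_j$ in $L$. More carefully, at each time $s<S(\tau_j)$ every machine is busy. Otherwise, at the event time at which that machine became idle, the scanner would reach $\tau_j$ or an earlier available operation.

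Next, we check that every operation running before $S(\tau_j)$ satisfies $q\prec_L\tau_j$. Any operation started at an event time when $\tau_j$ was available but skipped must precede $\tau_j$ in $L$. Hence on $[0,S(\tau_j))$ all $m$ machines process only operations from the prefix, so $m\,S(\tau_j)\le U_j$, and $C(\tau_j)\le U_j/m+t_j$.

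\emph{Stage 2: the execution.} We bound the start of $\pi_j$ relative to $C(\tau_j)$. After $C(\tau_j)$, $\pi_j$ is available. By the same scanning argument, at every time in $[C(\tau_j),S(\pi_j))$ all machines are busy, and only with operations $q\prec_L\pi_j$. We must avoid double counting: the machine time in $[0,C(\tau_j))$ spent on the prefix before $\tau_j$, together with the time in $[C(\tau_j),S(\pi_j))$, should be charged to $U_j+V_j$ plus the work of $\tau_j$ itself.

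The cleanest route is a single combined inequality over $[0,S(\pi_j))$. Define the idle-free intervals, and note that the machine running $\tau_j$ during $[S(\tau_j),C(\tau_j))$ is busy with $\tau_j$. The remaining $m-1$ machines in that window may idle, since only operations preceding $\tau_j$ or already released can run there.

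To handle this, we charge differently. We show $S(\pi_j)\le C(\tau_j)+W/m$, where $W$ is the total remaining work, after time $C(\tau_j)$, of operations $q\prec_L\pi_j$ other than $\tau_j$. We then show that $m(C(\tau_j)-t_j)$ plus $W$ is at most $U_j+V_j$, using the busy-machine property on $[0,S(\tau_j))$. This yields
\[
C_j^{(m)}\le S(\tau_j)+t_j+\frac{U_j+V_j-mS(\tau_j)}{m}+p_j=\frac{U_j+V_j}{m}+\sigma_j .
\]

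\emph{Main obstacle.} The delicate point is the window $[S(\tau_j),C(\tau_j))$, where machines might idle and also where operations after $\pi_j$ in $L$ might start. For the idling, work in that window simply is not counted, which is harmless. For the second issue, I would use the prefix-clean invariant (Lemma~\ref{lem:prefix-clean}) and the scanning rule to show that once $\pi_j$ is available, any operation started is $\prec_L\pi_j$ unless machines are otherwise free. Then, on $[C(\tau_j),S(\pi_j))$, all $m$ machines are busy with work from the set counted in $V_j$ or the leftover part of $U_j$. Remaining work of operations started before $C(\tau_j)$ but still running is also from this set, by the Stage 1 argument, so the total is bounded by $U_j+V_j-mS(\tau_j)$.
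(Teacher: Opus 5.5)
\paragraph{Gap in Stage 2.} Your Stage 1 is correct; it is the paper's first step, argued in more detail. Your split charging in Stage 2, which counts $[0,S(\tau_j))$ and $[C(\tau_j),S(\pi_j))$ separately and discards $[S(\tau_j),C(\tau_j))$, would give the bound directly. The gap is in the claim that charging rests on: that every operation occupying a machine during $[C(\tau_j),S(\pi_j))$ precedes $\pi_j$ in $L$.

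For operations started at or after $C(\tau_j)$ this follows from the scanning rule. For operations started before $S(\tau_j)$ it follows from Stage 1. For operations started in $[S(\tau_j),C(\tau_j))$, which are exactly the ones you flag, neither argument applies: $\pi_j$ is not yet available, so the list scheduler may start something after $\pi_j$ in $L$. Your appeal to ``the Stage 1 argument'' does not cover these operations. Lemma~\ref{lem:prefix-clean} applied to $\tau_j$ does not help either, since it only controls operations before $\tau_j$.

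Such operations really occur. Take $m=2$, $t_j=p_j=1$, and a second job with $t_b=2$. Single-machine 1-SORT gives $L=(\tau_j,\pi_j,\tau_b,\pi_b)$. The two-machine list schedule runs $\tau_b$ on $[0,2)$, so it crosses $C(\tau_j)=1$. In this example $S(\pi_j)=C(\tau_j)$, which is no accident.

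\paragraph{The missing idea.} The proof of Lemma~\ref{lem:prefix-clean} works verbatim for any started operation $q$, not only for tests. In the batch that starts $q$, every available operation before $q$ in $L$ is also started. At each later event, the only newly available operations are executions, and each comes with the machine freed by its own test. So a backlog before $q$ never forms.

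This gives a dichotomy.
\begin{itemize}
\item If some $q$ with $\pi_j\prec_L q$ starts before $C(\tau_j)$, then $\pi_j$ is newly available at $C(\tau_j)$ and precedes $q$, so it starts at $C(\tau_j)$. Stage 1 then gives $C_j^{(m)}\le U_j/m+\sigma_j$.
\item Otherwise, every operation started before $S(\pi_j)$ precedes $\pi_j$ in $L$. Your inequality $mS(\tau_j)+m\bigl(S(\pi_j)-C(\tau_j)\bigr)\le U_j+V_j$ then closes the proof.
\end{itemize}

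\paragraph{Comparison with the paper.} With this dichotomy added, your route is valid and slightly more economical than the paper's. The paper's one-line bound $S(\pi_j)\le\max\{C(\tau_j),(U_j+t_j+V_j)/m\}$ implicitly needs the same fact. It also needs the absence of idle time in $[S(\tau_j),C(\tau_j))$ whenever $S(\pi_j)>C(\tau_j)$, which your split accounting avoids.
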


\begin{proof}
Since $\tau_j$ is available from time $0$, the list scheduler can postpone
$\tau_j$ only by scheduling operations before $\tau_j$ in $L$.  The total running
time of these operations is $U_j$.  Hence $S(\tau_j)\le \frac{U_j}{m}$, and therefore $C(\tau_j)\le \frac{U_j}{m}+t_j$. Next consider $\pi_j$.  This operation cannot start before $\tau_j$ completes.
The total running time of operations preceding $\pi_j$ in $L$ is
\[
    U_j+t_j+V_j.
\]
By the list-scheduling rule, once $\pi_j$ is available, an idle machine cannot
be assigned to an operation after $\pi_j$ in $L$ while $\pi_j$ remains
unscheduled.  Therefore the standard prefix-volume bound for list schedules
gives
$
    S(\pi_j)
    \le
    \max\left\{
        C(\tau_j),\,
        \frac{U_j+t_j+V_j}{m}
    \right\}.
$
Since
$
    C(\tau_j)\le \frac{U_j}{m}+t_j,
$
we obtain
\[
    S(\pi_j)
    \le
    \max\left\{
        \frac{U_j}{m}+t_j,\,
        \frac{U_j+t_j+V_j}{m}
    \right\}.
\]
Adding $p_j$ gives
\[
    C_j^{(m)}
    \le
    \max\left\{
        \frac{U_j}{m}+t_j+p_j,\,
        \frac{U_j+t_j+V_j}{m}+p_j
    \right\}.
\]
Both terms inside the maximum are at most
$\frac{U_j+V_j}{m}+t_j+p_j
    =
    \frac{U_j+V_j}{m}+\sigma_j$.
This proves the lemma.
\end{proof}

%%yc add??
% \textcolor{blue}{
\begin{proof}[Proof of Theorem~\ref{thm:parallel-lifted-ratio}]
By Lemma~\ref{lem:batch-equivalence}, parallel 1-SORT has the same completion times as the $m$-machine list schedule induced by $L$. By Lemma~\ref{lem:parallel-completion-bound}, for every job $j$,
\[
    C_j^{(m)}
    \le
    \frac{U_j+V_j}{m}+\sigma_j.
\]
Since $C_j^{(1)}=U_j+V_j+\sigma_j$,
%%yc
this 
% Lemma~\ref{lem:parallel-completion-bound} 
implies
\[
    C_j^{(m)}
    \le
    \frac{1}{m}C_j^{(1)}
    +
    \left(1-\frac{1}{m}\right)\sigma_j.
\]
Summing over all jobs, we obtain
\[
    \ALG_m
    \le
    \frac{1}{m}\ALG_1
    +
    \left(1-\frac{1}{m}\right)\sum_j\sigma_j.
\]
If single-machine 1-SORT is $\rho$-competitive, then $\ALG_1\le \rho\,\OPT_1$.
Thus
\[
    \ALG_m
    \le
    \frac{\rho}{m}\OPT_1
    +
    \left(1-\frac{1}{m}\right)\sum_j\sigma_j.
\]
% \subparagraph{Competitive ratio.}
% It remains to compare the above upper bound with lower bounds on the optimum.
Order the jobs so that
\[
    \sigma_1\ge \sigma_2\ge\cdots\ge \sigma_N,
\]
and define
\[
    A:=\sum_{j=1}^{N} j\sigma_j,
    \qquad
    B:=\sum_{j=1}^{N}\sigma_j.
\]
For one machine, $\OPT_1=A$.
For $m$ identical machines, we use the standard lower bound from~\cite{GongChenHayashi2024},
\[
    \OPT_m
    \ge
    \frac{1}{m}A
    +
    \frac{1}{2}\left(1-\frac{1}{m}\right)B.
\]
%%yc
Combining the two bounds yields
% Combining this lower bound with the algorithmic upper bound gives
\[
    \frac{\ALG_m}{\OPT_m}
    \le
    \frac{
        \frac{\rho}{m}A+\left(1-\frac{1}{m}\right)B
    }{
        \frac{1}{m}A+\frac{1}{2}\left(1-\frac{1}{m}\right)B
    }
    =
    \frac{\rho A+(m-1)B}{A+\frac{m-1}{2}B}.
\]
Let $x:=\frac{A}{B}$.
Since $A=\sum_{j=1}^{N}j\sigma_j$ and $B=\sum_{j=1}^{N}\sigma_j$, we have
$x\ge 1$.  Define
\[
    R(x):=\frac{\rho x+m-1}{x+\frac{m-1}{2}}.
\]
If $\rho\le2$, then
\[
    R'(x)
    =
    \frac{(m-1)(\rho/2-1)}
    {\left(x+\frac{m-1}{2}\right)^2}
    \le 0.
\]
Thus 
%%yc add
the function
$R(x)$ is nonincreasing on $[1,\infty)$. Hence
\[
    R(x)\le R(1)
    =
    \frac{\rho+m-1}{1+\frac{m-1}{2}}
    =
    \frac{2(m+\rho-1)}{m+1}.
\]
This proves the theorem.
\end{proof}

\section{Conclusion}
\label{sec:con}

We study obligatory-test scheduling on $m$ identical machines with the objective of minimizing the sum of completion times. The main contribution is a completion-threshold 
%%yc
% \textcolor{blue}{
perspective for lower-bounds, which measures the global progress of an online algorithm through the times at which successive numbers of jobs are completed.
%}
% lower-bound framework that reasons directly about the global rate at which an online algorithm can reveal and complete jobs under total machine capacity. 
This viewpoint avoids the need for single-machine exchange arguments and is well suited to the parallel setting, where tests and processing operations may be interleaved across machines. The resulting lower bounds show that the difficulty of obligatory testing persists under parallelism. In particular, the dyadic construction gives a lower bound tending to $3/2$, and it also raises the deterministic single-machine lower bound from $\sqrt{2}$ to $3/2$.
% Using this framework, we prove a three-type lower bound of $1.4811$ and a dyadic multi-type lower bound tending to $3/2$.  The dyadic construction also improves the deterministic single-machine lower bound from $\sqrt{2}$ to $3/2$.  
On the algorithmic side, 
%%yc
the lifting theorem shows that guarantees for single-machine 1-SORT can be transferred to identical parallel machines, yielding a competitive ratio of $\frac{2(m+\rho-1)}{m+1}$ whenever the single-machine ratio is $\rho$.
% we proved a lifting theorem for 1-SORT: if single-machine 1-SORT is $\rho$-competitive, then parallel 1-SORT is $\frac{2(m+\rho-1)}{m+1}$-competitive on $m$ identical machines. 

Several questions remain open.  The most immediate one is to close the gap between the $3/2$ lower bound and the best known upper bounds for obligatory testing. Another direction is to design sharper algorithms for the multi-machine model, especially under uniform test times.  Finally, it would be interesting to develop lower-bound techniques that go beyond 
%%yc
completion-threshold arguments
% threshold counting 
and capture more detailed interactions between testing and processing on parallel machines.
% \nolinenumbers

% \bibliography{main-arxiv}

\end{document}